\theoremstyle{plain} 
\newtheorem{theorem}{Theorem}[section]
\newtheorem{proposition}[theorem]{Proposition}
\newtheorem{lemma}[theorem]{Lemma}
\theoremstyle{definition}
\newtheorem{definition}[theorem]{Definition}
\newtheorem{example}[theorem]{Example}
\newtheorem{remark}[theorem]{Remark}
\newcommand{\To}{\Rightarrow}
\newcommand{\lto}{\leftarrow}
\newcommand{\pure}{{(0)}} 
\newcommand{\ppg}{{(1)}} 
\newcommand{\ctc}{{(2)}} 
\newcommand{\dec}{{(d)}} 
\newcommand{\eqs}{\equiv} 
\newcommand{\eqw}{\sim} 
\newcommand{\empt}{\mathbb{0}}
\newcommand{\inn}{\mathit{normal}} 
\newcommand{\ina}{\mathit{abrupt}} 
\newcommand{\copi}{q} 
\newcommand{\cotuple}[1]{\left[ #1 \right]}
\newcommand{\bigcotuple}[1]{\bigl[ #1 \bigr]}
\newcommand{\Bigcotuple}[1]{\Bigl[ #1 \Bigr]}
\newcommand{\cotu}{\cotuple{\,}}
\newcommand{\toppg}[1]{\blacktriangledown #1 } 
\newcommand{\bigtoppg}[1]{\blacktriangledown #1 } 
\newcommand{\throw}[2]{\mathit{throw}_{#2,#1}} 
\newcommand{\try}[2]{\mathit{try}\{#1\}\,#2}  
\newcommand{\TRY}[2]{\mathit{TRY}\{#1\}\,#2}  
\newcommand{\catch}[2]{\mathit{catch}\,\{#1 \To #2\}}  
\newcommand{\catchn}[4]{\mathit{catch}\,\{#1\!\To\! #2|\dots|#3\!\To\! #4\}}  
\newcommand{\catchij}[4]{\mathit{catch}\,\{#1\!\To\! #2\;|\;#3\!\To\! #4\}}  
\newcommand{\Ss}{\Sigma} 
\newcommand{\Tt}{\Theta} 
\newcommand{\Hh}{\mathcal{H}} 
\newcommand{\Cc}{\mathcal{C}} 
\renewcommand{\ss}{\sigma} 
\newcommand{\pu}{\mathit{pure}} 	
\newcommand{\exc}{\mathit{exc}} 
\newcommand{\meq}{\mathit{meq}} 	
\newcommand{\app}{\mathit{app}} 
\newcommand{\deco}{\mathit{deco}} 
\newcommand{\expl}{\mathit{expl}} 
\newcommand{\undeco}{d} 
\newcommand{\expand}{e} 
\newcommand{\Set}{\mathbf{Set}}
\newcommand{\catC}{\mathbf{C}}
\newcommand{\catS}{\mathbf{S}} 
\newcommand{\catT}{\mathbf{T}} 
\newcommand{\eps}{\varepsilon}
\newcommand{\id}{\mathit{id}}
\newcommand{\Id}{\mathit{Id}}
\newcommand{\skE}{\mathbf{E}}
\newcommand{\ske}{\mathbf{e}}
\newcommand{\Rea}{\mathit{Real}}
\newcommand{\DiaL}{\mathcal{L}}
\newcommand{\DiaR}{\mathcal{R}}
\newcommand{\DiaF}{\mathcal{F}}
\newcommand{\DiaG}{\mathcal{G}}
\newcommand{\Yon}{\mathcal{Y}}
\newcommand{\Sig}{\mathit{Sig}} 	
\newcommand{\Mod}{\mathit{Mod}}
\newcommand{\fra}[2]{#2 \backslash #1} 
\newcommand{\unit}{\mathbb{1}}
\title{Adjunctions for exceptions} 
\author{
Jean-Guillaume Dumas
\thanks{This work is partly funded by the HPAC project 
of the French Agence Nationale de la Recherche (ANR 11 BS02 013).}, 
  Dominique Duval
\thanks{This work is partly funded by the CLIMT project 
of the French Agence Nationale de la Recherche (ANR 11 BS02 016).}, 
Laurent Fousse, Jean-Claude~Reynaud
\\ Universit\'e de Grenoble, Laboratoire Jean Kuntzmann
\\ \{jgdumas,dduval,lfousse,jcreynaud\}@imag.fr}
\date{October 26, 2012}
\begin{document}

\maketitle

\begin{abstract} 
\textbf{Abstract.}
The exceptions form a computational effect,
in the sense that there is an apparent mismatch between 
the syntax of exceptions and their intended semantics.
We solve this apparent contradiction by 
defining a logic for exceptions 
with a proof system which is close to their syntax
and where their intended semantics can be seen as a model. 
This requires a robust framework for logics and their morphisms, 
which is provided by diagrammatic logics. 

\textbf{Keywords.}
Computational effects.
Semantics of exceptions. 
Adjunction.
Categorical fractions.
Limit sketches.
Diagrammatic logics. 
Morphisms of logics. 
Decorated proof system.
\end{abstract}

\section*{Introduction}

Exceptions form a \emph{computational effect}, 
in the sense that a syntactic expression $f:X\to Y$ 
is not always interpreted as a function $f:X \to Y$: 
for instance a function which raises an exception 
has to be interpreted as a function $f:X \to Y+E$ 
where $E$ is the set of exceptions. 
In a computer language usually 
exceptions differ from errors in the sense that it is possible to recover from
an exception while this is impossible for an error;
thus, exceptions have to be both raised and handled. 
Besides, the theory of \emph{diagrammatic logics} forms a new paradigm 
for understanding the nature of computational effects;
in this paper, diagrammatic logics are applied to the 
denotational semantics of exceptions.

To our knowledge, the first categorical treatment of 
computational effects is due to Moggi \cite{Mo91};
this approach relies on \emph{monads}, it is implemented 
in the programming language Haskell \cite{Wa92,haskell}. 
The examples proposed by Moggi include 
the states monad $T X = (X\times S)^S$ where $S$ is the set of states and 
the exceptions monad $T X = X+E$ where $E$ is the set of exceptions. 
Later on, using the correspondence between monads
and algebraic theories, Plotkin and Power proposed to 
use \emph{Lawvere theories} for dealing with 
the operations and equations related to computational effects, 
for instance the lookup and update operations for states and 
the raising and handling operations for exceptions \cite{PP02,HP07}.
In the framework of Lawvere theories,   
an operation is called \emph{algebraic} 
when it satisfies some relevant genericity properties; 
the operations lookup and update for states 
and the operation for raising exceptions are algebraic in this sense, 
while the operation for handling exceptions is not \cite{PP03}.
This difficulty can be overcome,  
as for instance in \cite{SM04,Le06,PP09}.
Nevertheless, from these points of view the handling of exceptions is
inherently different from the updating of states.

In this paper we use another method for dealing with 
computational effects. 
This method has been applied to a parametrization process 
in \cite{DD10-dialog,DD12-param} 
and to the states effect in \cite{DDFR12b-state}.
It has led to the discovery of a duality 
between states and exceptions, briefly presented in \cite{DDFR12a-short}. 
Our approach also provides a notion of \emph{sequential product}, 
which is an alternative to the strength of a monad
for imposing an evaluation order for the arguments of a $n$-ary function
\cite{DDR11-seqprod}. 
With this point of view the fact that 
the handling operation for exceptions is not algebraic, 
in the sense of Lawvere theories, is not an issue. 
In fact, the duality between the exceptions effect and the states effect
\cite{DDFR12a-short} implies 
that catching an exception is dual to updating a state. 
It should be noted that we distinguish 
the private operation of catching an exception from 
the public operation of handling it (also called ``try/catch''),   
which encapsulates the catching operation.

Our idea is to look at an effect as an apparent mismatch between 
syntax and semantics: 
there is one logic which fits with the syntax, 
another one which fits with the semantics, 
and a third one which reconciles syntax and semantics. 
This third logic classifies the language features and their properties 
according to the way they interact with the effect; 
we call this kind of classification a \emph{decoration}. 
For this conciliation, as the features of the different logics are
quite different in nature, we will use morphisms from the decorated
logic to each of the two other logics, in a relevant category. 

This approach thus requires a robust framework 
for dealing with logics and morphisms between them. 
This is provided by the category of \emph{diagrammatic logics} 
\cite{D03-diaspec,DD10-dialog}. 
The main ingredient for defining this category is the notion
of  categorical \emph{fraction}, as introduced in  \cite{GZ67}  
for dealing with homotopy theory. Fractions are defined 
with respect to an \emph{adjunction}.
The syntactic aspect of logics is obtained by assuming that this 
adjunction is induced by a morphism of \emph{limit sketches} \cite{Eh68}, 
which implies that the adjunction connects \emph{locally presentable} 
categories. 
For each diagrammatic logic we define \emph{models} as relevant morphisms,   
\emph{inference rules} as fractions and \emph{inference steps}
as composition of fractions. 
Thus, diagrammatic logics are defined from well-known 
categorical ingredients; their novelty lies in the importance 
given to fractions, in the categorical sense, for formalizing logics. 

The category of diagrammatic logics is introduced in Section~\ref{sec:dialog}.
In Section~\ref{sec:expl} we look at exceptions 
from an \emph{explicit} point of view, 
by introducing a type of exceptions in the return type of operations 
which may raise exceptions. 
With this explicit point of view we formalize 
(by Definition~\ref{defi:expl-model}) 
the intended semantics of exceptions  
as provided in the documentation of the computer language Java \cite{java}
and reminded in Appendix~\ref{app:java}. 
We also introduce the distinction between the core operations 
and their encapsulation:  
typically, between the catching and the handling of exceptions. 
This explicit point of view is expressed in terms of 
a diagrammatic logic denoted $\DiaL_\expl$: 
the intended semantics of exceptions  
can be seen as a model with respect to $\DiaL_\expl$. 
Then in Section~\ref{sec:deco} we look at exceptions from a 
\emph{decorated} point of view, which fits with the syntax 
much better than the explicit point of view, since  
the return type of operations does not mention any type of exceptions. 
The key point in this logic is that the operations and equations are 
decorated according to their interaction with exceptions. 
This decorated point of view corresponds to another 
diagrammatic logic denoted $\DiaL_\deco$. 
We build a morphism of diagrammatic logics 
from $\DiaL_\deco$ to $\DiaL_\expl$, called the \emph{expansion}, 
from which our main result (Theorem~\ref{thm:deco-model}) follows: 
the intended semantics of exceptions  
can also be seen as a model with respect to $\DiaL_\deco$. 
  $$ \xymatrix@C=6pc@R=3pc{
  \DiaL_\deco \ar[r]^{\txt{expansion}} 
     \ar@{~>}[d]_(.4){\txt{model}}_(.7){\txt{(Section~\ref{sec:deco})}} & 
  \DiaL_\expl 
     \ar@{~>}[d]^(.4){\txt{model}}^(.7){\txt{(Section~\ref{sec:expl})}} \\
  \txt{semantics} \ar@{=}[r]_{\txt{(Theorem~\ref{thm:deco-model})}} & 
     \txt{semantics} \\ 
  }  $$ 
Then we prove some properties of exceptions using 
the rules of the decorated logic 
and the duality between exceptions and states. 
We conclude in Section~\ref{sec:conc} with some remarks
and guidelines for future work. 

\section{The category of diagrammatic logics}
\label{sec:dialog}

This paper relies on the robust algebraic framework 
provided by the category of diagrammatic logics 
\cite{DD10-dialog,D03-diaspec}. 
In Section~\ref{subsec:dialog-adjoint} we provide 
an informal description of diagrammatic logics which 
should be sufficient for understanding most of Sections~\ref{sec:expl} 
and~\ref{sec:deco}.
Let us also mention the paper \cite{DD12-param} 
for a detailed presentation of a simple application of diagrammatic logics. 
Precise definitions of diagrammatic logics and their morphisms
are given in Section~\ref{subsec:dialog-dialog}; 
these definitions rely on the categorical notions
of fractions and limit sketches.

\subsection{A diagrammatic logic is a left adjoint functor}
\label{subsec:dialog-adjoint}

In this Section we give 
an informal description of diagrammatic logics and their morphisms; 
the formal definitions will be given in Section~\ref{subsec:dialog-dialog}.
In order to define a diagrammatic logic we have to distinguish 
its \emph{theories}, which are closed under deduction, 
from its \emph{specifications}, which are presentations of theories. 
On the one hand, each specification generates a theory, 
by applying the \emph{inference rules} of the logic:
the specification is a family of axioms
and the theory is the family of theorems which can be proved
from these axioms, using the inference system of the logic.  
On the other hand, each theory can be seen as a (``large'') specification.

Then, clearly, a \emph{morphism} of logics has to map 
specifications to specifications
and theories to theories, in some consistent way.  
The diagrammatic logics we are considering in this paper 
are variants of the equational logic:   
their specifications are made of (some kinds of) 
sorts, operations and equations. 
Each sort, operation or equation can be seen as a specification,
hence every morphism of logics has to map it to a specification.
However it is not required that a sort be mapped to a sort, 
an operation to an operation or an equation to an equation. 
Thanks to this property, rather subtle relations between logics
can be formalized by morphisms of 
diagrammatic logics. This is the case for the expansion morphism, 
see Figure~\ref{fig:expansion}.

A diagrammatic logic is a left adjoint functor $\DiaL$
from a category $\catS$ of specifications
to a category $\catT$ of theories,
with additional properties that will be given 
in Section~\ref{subsec:dialog-dialog}.
Each specification generates a theory thanks to this functor $\DiaL$ 
and each theory can be seen as a specification thanks to 
the right adjoint functor $\DiaR:\catT\to\catS$. 
In addition, it is assumed that the canonical morphism 
$\eps_\Tt: \DiaL\DiaR\Tt \to \Tt$ is an isomorphism in $\catT$, 
so that each theory $\Tt$ can be seen as a presentation of itself. 
The fact that indeed the functor $\DiaL$ describes an \emph{inference system} 
is due to additional assumptions on the adjunction 
$\DiaL\dashv\DiaR$, which are given in the next Section.  

Although this point will not be formalized, 
in order to understand the definition of the models of a specification 
it may be helpful to consider that one is 
usually interested in two kinds of theories: 
the theories $\DiaL\Ss$ which are generated by a ``small'' (often finite) 
specification $\Ss$,  
and the ``large'' theories $\Tt$ which are provided by set theory, 
domain theory and other mathematical means, to be used as 
interpretation domains. 
However, formally this distinction is useless, and 
the \emph{models} of any specification $\Ss$ with values in any theory $\Tt$ 
are defined as the morphisms from $\DiaL\Ss$ to $\Tt$ in $\catT$. 
Thanks to the adjunction $\DiaL\dashv\DiaR$, there is an alternative 
definition which has a more constructive flavour:
the \emph{models} of $\Ss$ with values in $\Tt$ 
are the morphisms from $\Ss$ to $\DiaR\Tt$ in $\catS$.

The definition of \emph{morphisms} between diagrammatic logics 
derives in an obvious way from the definition of diagrammatic logics:
a morphism $F:\DiaL_1\to\DiaL_2$ is made of two functors 
$F_S:\catS_1\to\catS_2$ and $F_T:\catT_1\to\catT_2$ 
with relevant properties. 

In this paper we consider several diagrammatic logics 
which are variants of the equational logic: 
the specifications are defined from sorts, operations and equations,
and the inference rules are variants of the usual equational rules.  
Exceptions form a \emph{computational effect}, 
in the sense that a syntactic expression $f:X\to Y$ 
may be interpreted as a function $f:X \to Y+E$ 
(where $E$ is the set of exceptions) 
instead of being interpreted as a function $f:X \to Y$. 
We will define 
a logic $\DiaL_\deco$ for dealing with the syntactic expressions  
and another logic $\DiaL_\expl$ for dealing with exceptions in an explicit way  
by adding a sort of exceptions (also denoted $E$).
The key feature of this paper is the \emph{expansion} morphism
form the logic $\DiaL_\deco$ to the logic $\DiaL_\expl$, 
which maps a syntactic expression $f:X\to Y$ 
to the expression $f:X\to Y+E$ in a consistent way. 

\subsection{Diagrammatic logics, categorically}
\label{subsec:dialog-dialog}

The notion of diagrammatic logic is an algebraic notion 
which captures some major properties of logics 
and which provides a simple and powerful notion 
of morphism between logics. 
Each diagrammatic logic comes with a notion of models 
and it has a sound inference system. 

A category is \emph{locally presentable} when it is equivalent to 
the category $\Rea(\skE)$ of set-valued models, or \emph{realizations}, 
of a limit sketch $\skE$ \cite{Eh68,GU71}. 
The category $\Rea(\skE)$ has colimits and 
there is a canonical contravariant functor $\Yon$ from $\skE$ to $\Rea(\skE)$, 
called the \emph{contravariant Yoneda functor} associated with $\skE$, 
such that $\Yon(\skE)$
generates $\Rea(\skE)$ under colimits, in the sense that 
every object of $\Rea(\skE)$ may be written as a colimit 
over a diagram with objects in $\Yon(\skE)$.  

Each morphism of limit sketches $\ske:\skE\to\skE'$ 
gives rise, by precomposition with $\ske$, to a functor 
$G_\ske:\Rea(\skE')\to\Rea(\skE)$,
which has a left adjoint $F_\ske$ \cite{Eh68}. 
Let $\Yon$ and $\Yon'$ denote the contravariant Yoneda functors 
associated with $\skE$ and $\skE'$, respectively.  
Then $F_\ske$ extends $\ske$, which means that 
$F_\ske\circ\Yon=\Yon'\circ\ske$ up to a natural isomorphism. 
We call such a functor $F_\ske$ a \emph{locally presentable functor}. 
Then the three following properties are equivalent: 
the counit $\eps:F_\ske\circ G_\ske\To\Id$ is a natural isomorphism;  
the right adjoint $G_\ske$ is full and faithful;
the left adjoint $F_\ske$ is (up to an equivalence of categories) 
a \emph{localization}, 
i.e., it consists of adding inverses to some morphisms from $\Rea(\skE)$,
constraining them to become isomorphisms in $\Rea(\skE')$ \cite{GZ67}. 
Then it can be assumed that $\ske$ is also a localization: 
it consists of adding inverses to some morphisms from $\skE$. 

\begin{definition}
\label{defi:dialog-dialog}
A \emph{diagrammatic logic} is a locally presentable functor 
which is a localization, up to an equivalence of categories. 
\end{definition}

It follows that a diagrammatic logic is a left adjoint functor 
such that its counit is a natural isomorphism: 
these properties have been used in Section~\ref{subsec:dialog-adjoint}.

\begin{definition}
\label{defi:dialog-logic-model}
Let $\DiaL:\catS\to\catT$ be a diagrammatic logic with right adjoint $\DiaR$.
\begin{itemize}
\item The category of \emph{$\DiaL$-specifications} is $\catS$. 
\item The category of \emph{$\DiaL$-theories} is $\catT$. 
\item A \emph{model} of a specification $\Ss$ 
with values in a theory $\Tt$ is 
a morphism from $\DiaL\Ss$ to $\Tt$ in~$\catT$,
or equivalently (thanks to the adjunction) 
a morphism from $\Ss$ to $\DiaR\Tt$ in $\catS$. 
\end{itemize}
\end{definition}

The \emph{bicategory of fractions} associated to $\DiaL$ 
has the same objects as $\catS$ and 
a morphism from $\Ss_1$ to $\Ss_2$ in this bicategory 
is a \emph{fraction} $\fra{\ss}{\tau}:\Ss_1\to\Ss_2$,  
which means that it is a cospan 
$(\ss:\Ss_1\to\Ss'_2\lto\Ss_2:\tau)$ in $\catS$ 
such that $\DiaL\tau$ is invertible in $\catT$. 
Then $\ss$ is called the \emph{numerator} and $\tau$ the \emph{denominator} 
of the fraction $\fra{\ss}{\tau}$. 
It follows that we can define 
$\DiaL(\fra{\ss}{\tau})=\DiaL\tau^{-1}\circ\DiaL\ss$. 
The composition of consecutive fractions is defined 
as the composition of cospans, using a pushout in $\catS$. 

\begin{definition}
\label{defi:dialog-logic-rule}
Let $\DiaL:\catS\to\catT$ be a diagrammatic logic with right adjoint $\DiaR$.
\begin{itemize}
\item A \emph{rule} with \emph{hypothesis} $\Hh$ 
and \emph{conclusion} $\Cc$ is a fraction from $\Cc$ to $\Hh$
with respect to $\DiaL$. 
\item An \emph{instance} of a specification $\Ss_0$ in a specification $\Ss$ 
is a fraction from $\Ss_0$ to $\Ss$ 
with respect to $\DiaL$. 
\item The \emph{inference step}
applying a rule $\rho:\Cc\to\Hh$ to an instance 
$\iota:\Hh\to\Ss$ of $\Hh$ in $\Ss$ is the composition 
of fractions $\iota \circ \rho:\Cc\to\Ss$; 
it yields an instance of $\Cc$ in $\Ss$. 
\end{itemize}
\end{definition}

\begin{definition}
\label{defi:dialog-logic-proof}
Let $\DiaL:\catS\to\catT$ be a diagrammatic logic with right adjoint $\DiaR$.
\begin{itemize}
\item Each morphism of limit sketches $\ske:\skE_S\to\skE_T$ 
which gives rise to the adjunction $\DiaL\dashv\DiaR$ 
and which is a localization is called an \emph{inference system} for $\DiaL$. 
\item Then a rule $\fra{\ss}{\tau}$ is \emph{elementary} 
if $\ss$ and $\tau$ are the images,
by the canonical contravariant functor $\Yon$,  
of arrows $s$ and $t$ in $\skE_S$ 
such that $\ske(t)$ is invertible in $\skE_T$;
otherwise the rule $\fra{\ss}{\tau}$ is \emph{derivable}. 
\end{itemize}
\end{definition}

\begin{remark}
\label{rem:fractions}
An inference rule is usually written as a fraction 
$\frac{\Hh_1\dots\Hh_k}{\Cc}$, 
it is indeed related to a categorical fraction, 
as follows (however from the categorical point of view 
the numerator is on the conclusion side
and the denominator on the hypothesis side!). 
First let us remark that each $\Hh_i$ can be seen as a specification,
as well as $\Cc$, and that the common parts 
in the $\Hh_i$'s and in $\Cc$ are indicated by using the same names. 
Then let $\Hh$ be the vertex of the colimit of the $\Hh_i$'s, 
amalgamated according to their common names. 
The fraction $(\ss:\Cc\to\Hh'\lto\Hh:\tau)$ 
is defined as the pushout of $\Hh$ and $\Cc$ 
over their common names. 
Then the rule $\frac{\Hh_1\dots\Hh_k}{\Cc}$ 
corresponds to the categorical fraction $\fra{\ss}{\tau}:\Cc\to\Hh$
(see Example~\ref{exam:dialog-meq}).
In an inference system $\ske:\skE_S\to\skE_T$ for a logic $\DiaL$, 
the limit sketch $\skE_S$ describes the syntax and the morphism $\ske$ 
provides the inference rules of $\DiaL$. 
Thus, the description of a diagrammatic logic 
via one of its inference systems can be done 
algebraically by defining $\ske$ or the image of $\ske$ 
by the canonical funtor $\Yon$ 
(examples can be found in \cite{DD12-param}).
A diagrammatic logic can also be defined 
more traditionally by giving a grammar and a family of rules.
Moreover, when the logic is simple enough, 
it may be sufficient in practice to describe its theories.
\end{remark}

\begin{example}[Monadic equational logic]
\label{exam:dialog-meq}
The monadic equational logic $\DiaL_\meq$ can be defined from its theories. 
A monadic equational theory 
is a category where the axioms hold only up to some congruence relation.
Precisely, a \emph{monadic equational theory} is a directed graph 
(its vertices are called \emph{types} and its edges are called \emph{terms})
with an \emph{identity} term $\id_X:X\to X$ for each type $X$ 
and a \emph{composed} term $g\circ f:X\to Z$ for each pair 
of consecutive terms $(f:X\to Y,g:Y\to Z)$;
in addition it is endowed with \emph{equations} $f\eqs g:X\to Y$
that form an equivalence relation on parallel terms 
which is a \emph{congruence} with respect to the composition 
and such that the associativity and identity axioms hold up to 
congruence. 
The category of sets forms a $\DiaL_\meq$-theory $\Set$ 
where types, terms and equations are the sets, functions and equalities. 
We can look at a rule, for instance the transitivity rule for equations 
$\frac{f \eqs g \quad g \eqs h}{f \eqs h}$, 
as a categorical fraction  $\fra{\ss}{\tau}:\Cc\to\Hh$, as follows.  
$$ \begin{array}{|c|c|c|c|c|}
\cline{1-1} \cline{3-3} \cline{5-5} 
\Cc && \Hh' && \Hh \\
\cline{1-1} \cline{3-3} \cline{5-5} 
\xymatrix@R=.2pc{
X \ar@/^/[r]^{f} \ar@/_/[r]_{h} & Y \\ \\ 
\ar@{}[r]|{f\equiv h} & \\ 
}  &
\xymatrix@=2pc{\ar[r]^{\ss} & \\ }  &
\xymatrix@R=.2pc{
X \ar@/^/[r]^{f} \ar[r]|{\,g\,} \ar@/_/[r]_{h} & Y \\ \\ 
\ar@{}[r]|{f\equiv g,\, g\equiv h,\, f\equiv h} & \\ 
}   &
\xymatrix@=2pc{ & \ar[l]_{\tau} \\ }  &
\xymatrix@R=.2pc{
X \ar@/^/[r]^{f} \ar[r]|{\,g\,} \ar@/_/[r]_{h} & Y \\ \\ 
\ar@{}[r]|{f\equiv g,\, g\equiv h} & \\ 
}    \\
\cline{1-1} \cline{3-3} \cline{5-5} 
\end{array} $$
\end{example} 

\begin{remark}
\label{rem:doctrines}
Diagrammatic logics generalize \emph{$E$-doctrines}, 
in the sense of \cite{We94}.
Let $E$ be a type of sketch, determined by what sorts 
of cones and cocones are allowed in the sketch. 
Then $E$ determines a type of category, 
required to have all (co)limits of the sorts of (co)cones allowed by $E$,
and it  determines a type of functor, required to preserve that 
sorts of (co)limits. 
Following \cite{We94}, 
the \emph{$E$-doctrine} is made of these sketches, categories and functors.
Each $E$-doctrine corresponds to a diagrammatic logic 
$\DiaL_E:\catS_E\to\catT_E$, 
where $\catS_E$ is the category of $E$-sketches (with the morphisms 
of $E$-sketches), 
$\catT_E$ is the category of $E$-categories and $E$-functors,
and $\DiaL_E$ is the left adjoint functor which maps each 
$E$-sketch to its \emph{theory}. 
For instance the $E$-doctrine made of finite products sketches, 
cartesian categories and functors preserving finite products
corresponds to the equational logic. 
\end{remark}

An important feature of diagrammatic logics is their simple and powerful
notion of morphism, which is a variation of the notion of morphism 
in an arrow category. 

\begin{definition}
\label{defi:dialog-morphism}
Given diagrammatic logics $\DiaL:\catS\to\catT$ and $\DiaL':\catS'\to\catT'$, 
a \emph{morphism of diagrammatic logics} 
$\DiaF:\DiaL\to \DiaL'$ is made of two locally presentable functors 
$\DiaF_S:\catS\to\catS'$ and $\DiaF_T:\catT\to\catT'$ 
such that the square of left adjoints $(\DiaL,\DiaL',\DiaF_S,\DiaF_T)$ 
is induced by a commutative square of limit sketches. 
It follows that the right adjoints form a commutative square 
and that the left adjoints form a square which is commutative up to a 
natural isomorphim. 
\end{definition}

This means that a morphism from $\DiaL$ to $\DiaL'$ 
maps (in a coherent way) 
each specification of $\DiaL$ to a specification of $\DiaL'$ 
and each proof of $\DiaL$ to a proof of $\DiaL'$. 
Moreover, it is sufficient to check 
that each elementary specification (i.e., 
each specification in the image of the functor $\Yon$)
of $\DiaL$ is mapped to a specification of $\DiaL'$
and that each elementary proof (i.e., each inference rule) 
of $\DiaL$ is mapped to a proof of $\DiaL'$. 
The next result is the key point for proving Theorem~\ref{thm:deco-model}; 
its proof is a straightforward application of the properties of adjunctions. 

\begin{proposition}
\label{prop:deco-morphism} 
Let $\DiaF=(\DiaF_S,\DiaF_T):\DiaL\to \DiaL'$ be 
a morphism of diagrammatic logics 
and let $\DiaG_T$ be the right adjoint of $\DiaF_T$. 
Let $\Ss$ be a $\DiaL$-specification and $\Tt'$ a $\DiaL'$-theory. 
Then there is a bijection, natural in $\Ss$ and $\Tt'$:
  $$  \Mod_{\DiaL}(\Ss,\DiaG_T\Tt') \cong \Mod_{\DiaL'}(\DiaF_S\Ss,\Tt') \;.$$
\end{proposition}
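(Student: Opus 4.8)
The plan is to reduce the claimed bijection to a chain of standard adjunction isomorphisms, using the two descriptions of a model provided by Definition~\ref{defi:dialog-logic-model} together with the commutativity squares that, by Definition~\ref{defi:dialog-morphism}, accompany the morphism $\DiaF$. I write $\DiaG_S$ for the right adjoint of $\DiaF_S$ and $\DiaG_T$ for the right adjoint of $\DiaF_T$; by Definition~\ref{defi:dialog-morphism} the four left adjoints form a square commuting up to a natural isomorphism $\DiaF_T\circ\DiaL\cong\DiaL'\circ\DiaF_S$, while the four right adjoints form a strictly commuting square $\DiaR\circ\DiaG_T=\DiaG_S\circ\DiaR'$. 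Recall also that, for any diagrammatic logic, a model can be computed on either side of its defining adjunction:
$$\Mod_\DiaL(\Ss,\Tt)=\Hom_\catT(\DiaL\Ss,\Tt)\cong\Hom_\catS(\Ss,\DiaR\Tt),$$
and likewise for $\DiaL'$.

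First I would expand the left-hand side using the right-adjoint description of a model, and then rewrite it with the strictly commuting square of right adjoints:
$$\Mod_\DiaL(\Ss,\DiaG_T\Tt')\cong\Hom_\catS(\Ss,\DiaR\DiaG_T\Tt')=\Hom_\catS(\Ss,\DiaG_S\DiaR'\Tt').$$
Next I would apply the hom-set bijection of the adjunction $\DiaF_S\dashv\DiaG_S$ to move $\DiaF_S$ across the $\Hom$:
$$\Hom_\catS(\Ss,\DiaG_S\DiaR'\Tt')\cong\Hom_{\catS'}(\DiaF_S\Ss,\DiaR'\Tt').$$
Finally I would recognise the resulting set as a set of $\DiaL'$-models, again by the right-adjoint description:
$$\Hom_{\catS'}(\DiaF_S\Ss,\DiaR'\Tt')=\Mod_{\DiaL'}(\DiaF_S\Ss,\Tt').$$
Composing these three bijections gives the stated isomorphism. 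An equally short route stays on the theory side: apply $\DiaF_T\dashv\DiaG_T$ to obtain $\Hom_\catT(\DiaL\Ss,\DiaG_T\Tt')\cong\Hom_{\catT'}(\DiaF_T\DiaL\Ss,\Tt')$, then substitute the natural isomorphism $\DiaF_T\circ\DiaL\cong\DiaL'\circ\DiaF_S$ to rewrite $\DiaF_T\DiaL\Ss$ as $\DiaL'\DiaF_S\Ss$.

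The naturality in $\Ss$ and in $\Tt'$ comes for free, since every link in the chain is either a hom-set bijection of an adjunction, which is natural in both variables, or the substitution of a fixed naturally isomorphic functor, which preserves naturality. I do not expect a genuine obstacle here, which is consistent with the remark preceding the statement; the one point meriting care is the bookkeeping of which square commutes strictly and which only up to isomorphism. One must invoke strict commutativity only for the right-adjoint square $\DiaR\circ\DiaG_T=\DiaG_S\circ\DiaR'$, where it holds, being inherited from the underlying commutative square of limit sketches by precomposition, and must treat $\DiaF_T\circ\DiaL\cong\DiaL'\circ\DiaF_S$ as a natural isomorphism rather than an equality. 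Once this distinction is respected, the argument is a direct diagram chase.
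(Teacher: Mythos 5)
Your proof is correct and is exactly the argument the paper intends: the paper offers no written-out proof, saying only that the result is ``a straightforward application of the properties of adjunctions,'' and your chain of bijections (the $\DiaR$-side description of models, the strictly commuting square of right adjoints $\DiaR\circ\DiaG_T=\DiaG_S\circ\DiaR'$, and the adjunction $\DiaF_S\dashv\DiaG_S$) is precisely that application. Your bookkeeping of which square commutes strictly and which only up to natural isomorphism also matches Definition~\ref{defi:dialog-morphism} exactly, so nothing is missing.
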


\section{Denotational semantics of exceptions}
\label{sec:expl}

In this Section we define a denotational semantics 
of exceptions which relies on the semantics of exceptions in Java. 
Syntax is introduced in Section~\ref{subsec:expl-sig} 
as a signature $\Sig_\exc$.
The fundamental distinction between ordinary and exceptional values is 
discussed in Section~\ref{subsec:expl-sem}.  
Sections~\ref{subsec:expl-expl} and~\ref{subsec:expl-spec} 
are devoted to the definitions of  
a logic with an explicit type of exceptions 
and a specification $\Ss_\expl$ for exceptions with respect to this logic. 
Then in Section~\ref{subsec:expl-model}  
the denotational semantics of exceptions is defined 
as a model of this specification. 
This is extended to higher-order constructions 
in Section~\ref{subsec:expl-lambda}.

We often use the same notations for a feature  
in a signature and for its interpretation. 
So, the syntax of exceptions corresponds to the signature $\Sig_\exc$,
while the semantics of exceptions is defined as a model 
of a specification $\Ss_\expl$. But the 
signature underlying $\Ss_\expl$ is different form $\Sig_\exc$: 
this mismatch is due to the fact that 
the exceptions form a computational effect. 
The whole paper can be seen as a way to reconcile both points of view.
This can be visualized by Figure~\ref{fig:expl-syn-sem}, 
with the signature for exceptions $\Sig_\exc$ on one side
and the specification $\Ss_\expl$ with its model $M_\expl$ on the other side; 
the aim of Section~\ref{sec:deco} will be to fill the gap between
$\Sig_\exc$ and $\Ss_\expl$ by introducing 
new features in the middle, see Figure~\ref{fig:deco-syn-sem}.

\begin{figure}[!ht]   
$$ \begin{array}{|c|} 
\hline 
  \xymatrix@C=3pc@R=0pc{
  \txt{ \textbf{syntax} } & & \txt{ \textbf{semantics} } \\
  \Sig_\exc & \txt{ ? }\longleftrightarrow\txt{ ? } & 
    \Ss_\expl \ar[ddd]^{M_\expl} \\ \\ \\
  & & \Tt_{\expl}  \\  
  }  \\
\hline 
\end{array}$$
\caption{Syntax and semantics of exceptions} 
\label{fig:expl-syn-sem} 
\end{figure}

\subsection{Signature for exceptions} 
\label{subsec:expl-sig}

The syntax for exceptions in computer languages depends on the language:  
the keywords for raising exceptions may be either 
\texttt{raise} or \texttt{throw}, 
and for handling exceptions they may be either 
\texttt{handle}, \texttt{try-with} or \texttt{try-catch}, for instance. 
In this paper we rather use \texttt{throw} and \texttt{try-catch}. 
More precisely, 
the syntax of our language may be described in two parts: 
a \emph{pure} part and an \emph{exceptional} part. 

The pure part is a signature $\Sig_\pu$.  
The interpretation of the pure operations should 
neither raise nor handle exceptions. 
For simplicity we assume that the pure operations 
are either constants or unary; 
general $n$-ary operations will be mentioned in Section~\ref{sec:conc}. 

The signature $\Sig_\exc$ for exceptions is 
made of $\Sig_\pu$ together with the types and operations 
for raising and handling exceptions. 
In order to deal with several types of exceptions 
which can be parameterized, 
we introduce a set of indices $I$ and for each index $i\in I$ 
we choose a pure type $P_i$ called the \emph{type of parameters} 
for the exceptions of index $i$. 
The new operations in $\Sig_\exc$ are the operations 
for raising and handling operations, as follows. 

\begin{definition}
\label{defi:expl-sig}
Let $\Sig_\pu$ be a signature.  
Given a set of indices $I$ and a type $P_i$ of $\Sig_\pu$ 
for each $i\in I$, 
the \emph{signature for exceptions} $\Sig_\exc$ is made of $\Sig_\pu$
together with,  for each $i\in I$:   
a \emph{raising} (or \emph{throwing} ) operation for each type $Y$ 
in $\Sig_\pu$:
  $$ \throw{i}{Y} :P_i\to Y \;,$$ 
and a \emph{handling} (or \emph{try-catch}) operation 
for each $\Sig_\exc$-term $f:X\to Y$, 
each non-empty list of indices $(i_1,\dots,i_n)$ in $I$ 
and each family of $\Sig_\exc$-terms $g_1:P_{i_1}\to Y$, \dots, $g_n:P_{i_n}\to Y$: 
  $$ \try{f}{\catchn{i_1}{g_1}{i_n}{g_n}} : X \to Y \;.$$ 
\end{definition}

\begin{remark}
\label{rem:meaning}
The precise meaning of these operations is defined 
in Section~\ref{subsec:expl-model}. 
Roughly speaking, 
relying for instance on Java see appendix~\ref{app:java}, 
raising an exception signals an error, 
which may be ``catched'' by an exception handler, 
so that the evaluation may go on along another path. 
For raising an exception, 
$\throw{i}{Y}$ turns some parameter of type $P_i$
into an exception of index $i$, in such a way that this exception 
is considered as being of type $Y$. 
For handling an exception, 
the evaluation of $\try{f}{\catch{i}{g}} $ 
begins with the evaluation of $f$; 
if the result is not an exception then it is returned;  
if the result is an exception of index $i$ 
then this exception is catched, which means that its parameter is recovered  
and $g$ is applied to this parameter; 
otherwise the exception is returned, 
which usually produces an error message like ``uncaught exception\dots''.  
The evaluation of $ \try{f}{\catchn{i_1}{g_1}{i_n}{g_n}} $ for any $n>1$ 
is similar; it is checked whether the exception returned by $f$
has index $i_1$ or $i_2$ \dots or $i_n$ in this order,
so that whenever $i_j=i_k$ with $j<k$ the clause 
$i_k\To g_{i_k}$ is never executed. 
\end{remark}

\subsection{Ordinary values and exceptional values} 
\label{subsec:expl-sem}

In order to express the denotational semantics of exceptions, 
a major point is the distinction between two kinds of values:
the ordinary (or non-exceptional) values and the exceptions.
It follows that the operations may be classified 
according to the way they may, or may not, interchange 
these two kinds of values: 
an ordinary value may be \emph{tagged} for constructing an exception, 
and later on the tag may be cleared in order to recover the value; 
then we say that the exception gets \emph{untagged}. 
Let us introduce a set $E$ called the \emph{set of exceptions}.
For each set $X$ we consider the disjoint union $X+E$.
The denotational semantics of exceptions relies on the following facts. 
Each type $X$ in $\Sig_\exc$ is interpreted as a set~$X$.
Each term $f:X\to Y$ is interpreted as a function $f:X \to Y+E$, 
and whenever $f$ is pure this function has its image in $Y$. 
The fact that a term $f:X\to Y$ is not always interpreted 
as a function $f:X \to Y$ implies that the exceptions 
form a \emph{computational effect}. 

\begin{definition}
\label{defi:expl-fcts}
For each set $X$, an element of $X+E$ is 
an \emph{ordinary value} if it is in $X$ 
and an \emph{exceptional value} if it is in $E$.
A function $f:X \to Y+E$ or $f:X+E \to Y+E$ 
\emph{raises an exception} 
if there is some $x\in X$ such that $f(x)\in E$
and $f$ \emph{recovers from an exception} 
if there is some $e\in E$ such that $f(e)\in Y$. 
A function $f:X+E \to Y+E$ \emph{propagates exceptions} 
if $f(e)=e$ for every $e\in E$.
\end{definition}

\begin{remark}   
Clearly, a function $f:X+E \to Y+E$ which propagates exceptions
may raise an exception but cannot recover from an exception. 
Such a function $f$ is characterized by its restriction $f|_X:X \to Y+E$. 
In addition, every function $f_0:X\to Y$ can be extended 
in a unique way as a function $f:X+E \to Y+E$ which propagates exceptions;  
then $f|_X$ is the composition of $f_0$ with the inclusion of $Y$ in $Y+E$.  
\end{remark}

\begin{remark}
\label{rem:ppg}  
An important feature of a language with exceptions is that 
the interpretation of every term 
is a function which propagates exceptions; 
\emph{this function may raise exceptions but it cannot 
recover from an exception.}  
Indeed, the \emph{catch} block in a \emph{try-catch} expression 
may recover from exceptions which are raised inside the \emph{try} block, 
but if an exception is raised before the \emph{try-catch} expression 
is evaluated, this exception is propagated. 
Thus, the \emph{untagging} functions that will be introduced 
in Section~\ref{subsec:expl-expl} in order to recover from exceptions 
are not the interpretation of any term of the signature $\Sig_\exc$. 
In fact, this is also the case for the \emph{tagging} functions 
that will be used for raising exceptions. 
These tagging and untagging functions are called 
the \emph{core} functions for exceptions;  
they are \emph{private} in the sense that they 
do not appear in $\Sig_\exc$, 
but they are used for defining the 
\emph{public} operations for raising and handling exceptions
which are part of $\Sig_\exc$.
\end{remark}

\subsection{Explicit logic for exceptions} 
\label{subsec:expl-expl}

Let us define a logic with a type of exceptions 
by describing its theories. 

\begin{definition}
\label{defi:expl-dialog} 
A theory of the \emph{explicit logic for exceptions} $\DiaL_\expl$ 
is a monadic equational theory (as in Example~\ref{exam:dialog-meq}) 
with a distinguished type $E$ called the \emph{type of exceptions} 
and with a cocone 
$(\inn_X: X \to X+E \lto E : \ina_X)$ for each type $X$,  
which satisfies the coproduct universal property up to congruence:
for every cocone $(f: X \to Y \lto E :k)$ 
there is a term $\cotuple{f|k}:X+E\to Y$,
unique up to equations,  
such that $\cotuple{f|k}\circ \inn_X \equiv f$ 
and $\cotuple{f|k}\circ \ina_X \equiv k$. 
\end{definition}

\begin{definition}
\label{defi:expl-set} 
Let $E$ denote a set, then $\Set_{E,\expl}$ denotes the $\DiaL_\expl$-theory 
where types, terms and equations are the sets, functions and equalities, 
where $E$ is the set of exceptions and where for each set $X$ the cocone 
$(X \to X+E \lto E)$ is the disjoint union. 
\end{definition}

\begin{remark} 
\label{rem:expl-empt} 
In addition, it can be assumed that there is an initial type $\empt$ 
(up to congruence) in each explicit theory, 
hence a unique term $\cotu_X:\empt\to X$ 
for each type $X$ such that the cocone 
$(\id_X: X \to X \lto \empt : \cotu_X)$ is a coproduct up to congruence.
\end{remark} 

\subsection{Explicit specification for exceptions} 
\label{subsec:expl-spec}

In order to express the meaning of the raising and handling operations  
we introduce new operations (called the \emph{core} operations) and equations 
in such a way that the functions for raising and handling exceptions 
are now defined in terms of the core operations. 

\begin{definition}
\label{defi:expl-spec}
Let $\Sig_\pu$ be a signature.  
Given a set of indices $I$ and 
a type $P_i$ in $\Sig_\pu$ for each $i\in I$, 
the \emph{explicit specification for exceptions} $\Ss_\expl$ 
is the $\DiaL_\expl$-specification 
made of $\Sig_\pu$ together with for each $i\in I$:
an operation $t_i:P_i\to E$ 
called the \emph{exception constructor} or the \emph{tagging} operation 
of index $i$ and    
an operation $c_i:E \to P_i+E$ 
called the \emph{exception recovery} or the \emph{untagging} function
of index $i$, 
together with the equations 
$c_i\circ t_i \equiv \inn_{P_i}$ and 
$c_i\circ t_j \equiv \ina_{P_i} \circ t_j$ for all $j\ne i$. 
Then for each $i\in I$ the raising and handling functions 
are respectively defined using these two core operations 
as follows: 
the \emph{raising} function $ \throw{i}{Y}$ 
for each type $Y$ in $\Sig_\pu$ is:
  $$ \throw{i}{Y} = \ina_Y \circ t_i :P_i\to Y+E $$ 
and the \emph{handling} function:  
  $$ \try{f}{\catchn{i_1}{g_1}{i_n}{g_n}} : X \to Y+E $$ 
  for each term $f:X\to Y+E$, 
  each non-empty list of indices $(i_1,\dots,i_n)$ 
  and each terms $g_j:P_{i_j}\to Y+E$ for $j=1,\dots,n$ 
  is defined in two steps: 	
\begin{description}
\item[(try)] the function $ \try{f}{k} : X\to Y+E $ is defined 
    for any function $k: E\to Y+E$ by: 
  $$ \try{f}{k} \;=\; 
    \Bigcotuple{\; \inn_Y \;|\; k \;} \circ f 
  $$ 
\item[(catch)] the function $\catchn{i_1}{g_1}{i_n}{g_n}: E\to Y+E$ 
is obtained by setting $p=1$ in the family of functions 
$k_p = \catchn{i_p}{g_p}{i_n}{g_n} : E\to Y+E$
(for $p=1,\dots,n+1$) which are defined recursively by: 
  $$ k_p \;=\; \begin{cases} 
    \ina_Y & \mbox{ when } p=n+1 \\
    \bigcotuple{\; g_p \;|\; k_{p+1} \;} \circ c_{i_p} & \mbox{ when } p\leq n \\
  \end{cases} $$
\end{description}
\end{definition}

\begin{remark}
When $n=1$ we get simply: 
$$ \try{f}{\catch{i}{g}} =  
  \Bigcotuple{\inn_Y |
       \bigcotuple{g| \ina_Y  
               } \circ c_i 
            } \circ f  $$
which can be illustrated as follows,
with $\try{f}k$ on the left and $k=\catch{i}{g}$ on the right:
  $$ \xymatrix@C=2pc@R=1.5pc{
  & Y \ar[d]_{\inn} \ar[rrrd]^{\inn} &&& \\
  X \ar[r]^{f} & 
    Y+E \ar[rrr]^(.4){\cotuple{\inn| k } } &
    \ar@{}[ul]|(.4){=} \ar@{}[dl]|(.4){=} && Y+E \\ 
  & E \ar[u]^{\ina} \ar[rrru]_{k} &&& \\
  } \qquad 
	\xymatrix@C=2pc@R=1.5pc{
  & P_i \ar[d]_{\inn} \ar[rrrd]^{g} &&& \\
  E \ar[r]^(.4){c_i} & 
    P_i+E 
      \ar[rrr]^(.4){\cotuple{g|\ina}} &
    \ar@{}[ul]|(.4){=} \ar@{}[dl]|(.4){=} && Y+E \\ 
  & E \ar[u]^{\ina} \ar[rrru]_{\ina} &&& \\
  }  $$
\end{remark}

\begin{remark}
About the handling function $ \try{f}{\catchn{i_1}{g_1}{i_n}{g_n}} $, 
it should be noted that each $g_i$ may itself raise exceptions
and that the indices $i_1,\dots,i_n$ form a list: 
they are given in this order 
and they need not be pairwise distinct. 
It is assumed that this list is non-empty 
because it is the usual choice in programming languages,
however it would be easy to drop this assumption. 
\end{remark}

\subsection{The intended semantics of exceptions} 
\label{subsec:expl-model}

As usual, a \emph{$\Sig$-algebra} $M$, for any signature $\Sig$, 
is made of a set $M(X)$ for each type $X$ in $\Sig$
and a function $M(f):M(X_1)\times\dots\times M(X_n) \to M(Y)$ 
for each operation $f:X_1,\dots,X_n \to Y$.  
As in Definition~\ref{defi:expl-spec}, let $\Sig_\pu$ be a signature 
and let $\Ss_\expl$ be the explicit specification for exceptions
associated to a family of pure types $(P_i)_{i\in I}$. 

\begin{definition}
\label{defi:expl-mod} 
Given a $\Sig_\pu$-algebra $M_\pu$, 
the \emph{model of exceptions} $M_\expl$ of $\Ss_\expl$ extending $M_\pu$ 
has its values in $\Set_{E,\expl}$; 
it coincides with $M_\pu$ on $\Sig_\pu$,
it interprets the type $E$ as 
the disjoint union $E=\sum_{i\in I}P_i$ 
and the tagging operations $t_i:P_i\to E$ as the inclusions.  
\end{definition}

It follows that the interpretation of the tagging operation 
maps a non-exceptional value $a\in P_i$ to an exception $t_i(a)\in E$
(for clarity we keep the notation $t_i(a)$ instead of $a$).
Then, because of the equations, 
the interpretation of the untagging operation $c_i:E \to P_i$ 
must proceed as follows: 
it checks whether its argument $e$ is in the image of $t_i$,  
if this is the case 
then it returns the parameter $a\in P_i$ such that $e=t_i(a)$, 
otherwise it propagates the exception~$e$. 
It is easy to check that the next Definition corresponds to 
the description of the mechanism of exceptions in Java: 
see remark~\ref{rem:meaning} and Appendix~\ref{app:java}.

\begin{definition}
\label{defi:expl-model}
Given a signature $\Sig_\pu$ and a $\Sig_\pu$-algebra $M_\pu$, 
the \emph{intended semantics of exceptions} is 
the model $M_\expl$ of the specification $\Ss_\expl$ extending $M_\pu$. 
\end{definition}

\begin{remark}
\label{rem:expl-model} 
Let $\Sig_\exc$ be the signature for exceptions as in 
Definition~\ref{defi:expl-sig}.
It follows from Definition~\ref{defi:expl-model} that the intended semantics 
of exceptions cannot be seen as a $\Sig_\exc$-algebra.
Indeed, although there is no type of exceptions in $\Sig_\exc$, 
the operation $\throw{i}{Y} :P_i\to Y$ in $\Sig_\exc$
has to be interpreted as a function $\throw{i}{Y} :P_i\to Y+E$, 
where the set of exceptions $E$ is usually non-empty. 
\end{remark}

\subsection{About higher-order constructions} 
\label{subsec:expl-lambda}

Definition~\ref{defi:expl-model} can easily be extended to 
a functional language. 
In order to add higher-order features to our explicit logic, 
let us introduce a functional type $Z^W$ 
for each types $W$ and $Z$. 
Then each $\varphi:W\to Z+E$ gives rise to 
$\lambda x.\varphi:\unit \to (Z+E)^W$,  
which does not raise exceptions.  
It follows that 
$\try{\lambda x.\varphi}{\catchn{i_1}{g_1}{i_n}{g_n}} \equiv \lambda x.\varphi$,
which is the intended meaning of exceptions in 
functional languages like ML \cite{sml}. 

\section{Exceptions as a computational effect}
\label{sec:deco}

According to Definition~\ref{defi:expl-model}, 
the intended semantics of exceptions can be defined in the explicit logic 
as a model $M_\expl$ of the explicit specification $\Ss_\expl$.  
However, by introducing a type of exceptions, 
the explicit logic does not take into account 
the fact that the exceptions form a computational effect: 
the model $M_\expl$ cannot be seen as an algebra of the signature $\Sig_\exc$ 
for exceptions (Definition~\ref{defi:expl-sig}) 
since (denoting $X$ for $M_\expl(X)$ for each type $X$) 
the operation $\throw{i}{Y} :P_i\to Y$
is interpreted as a function from $P_i $ to $Y+E$  
instead of from $P_i $ to $Y$: this is a fundamental remark of Moggi 
in \cite{Mo91}. 

In this Section we build another logic $\DiaL_\deco$,  
called the \emph{decorated} logic for exceptions, 
and a decorated specification $\Ss_\deco$ for exceptions
which reconciles the syntax and the semantics: 
$\Ss_\deco$ fits with the syntax since it has no type of exceptions,
and it provides the intended semantics because this semantics 
can be seen as a model $M_\deco$ of $\Ss_\deco$. 
In the decorated logic the terms and the equations are classified,
or \emph{decorated}, and their interpretation depends on their decoration. 

The decorated logic is defined in Section~\ref{subsec:deco-deco}. 
In Section~\ref{subsec:deco-spec} we define 
the decorated specification $\Ss_\deco$ and  
the model $M_\deco$ of $\Ss_\deco$ and we prove that 
$M_\deco$ provides the intended semantics of exceptions. 
Besides, we show in Section~\ref{subsec:deco-app} that it is easy 
to relate the decorated specification $\Ss_\deco$
to the signature for exceptions $\Sig_\exc$;  
for this purpose we introduce a logic $\DiaL_\app$,
called the \emph{apparent} logic,  
which is quite close to the monadic equational logic. 
This is illustrated by Figure~\ref{fig:deco-syn-sem},
which extends Figure~\ref{fig:expl-syn-sem} by filling
the gap between syntax and semantics.
This is obtained by adding two morphisms of logic, 
$F_\undeco: \DiaL_\deco \to \DiaL_\app $ on the syntax side 
and $F_\expand: \DiaL_\deco \to \DiaL_\expl $ on the semantics side.
The rules of the decorated logic are used for proving 
some properties of exceptions in Section~\ref{subsec:deco-proof}.
The decorated logic is extended to higher-order features 
in Section~\ref{subsec:deco-lambda}.

\begin{figure}[!ht]   
  $$ \begin{array}{|c|}
  \hline
  \xymatrix@C=6pc@R=0pc{
  \qquad\quad \txt{ \textbf{syntax} } & \txt{ \textbf{syntax} } \ar@{=>}[l] & \\
  & \txt{ \textbf{semantics} } \ar@{<=>}[r] & \txt{ \textbf{semantics} } \\
  \qquad\quad \DiaL_\app & 
    \DiaL_\deco \ar[l]_{F_\undeco} \ar[r]^{F_\expand} & 
    \DiaL_\expl  \\ \\
  \Sig_\exc \subseteq \Ss_\app & 
    \Ss_\deco \ar@{|->}[l]_{F_\undeco} \ar@{|->}[r]^{F_\expand} \ar[ddd]_{M_\deco} & 
    \Ss_\expl \ar[ddd]^{M_\expl} \\ \\ \\
  & \Tt_{\deco} &
  \Tt_{\expl} \ar@{|->}[l]_{G_\expand} \\  
  }  \\
  \hline
  \end{array} $$ 
\caption{Syntax and semantics of exceptions, reconciled} 
\label{fig:deco-syn-sem} 
\end{figure}

\subsection{Decorated logic for exceptions}
\label{subsec:deco-deco}

Here we define the decorated logic for exceptions $\DiaL_\deco$,
by giving its syntax and its inference rules, and we define 
a morphism from $\DiaL_\deco$ to $\DiaL_\expl$ 
for expliciting the meaning of the decorations. 
The syntax of $\DiaL_\deco$ consists in types, terms and equations,
like $\DiaL_\meq$ in Example~\ref{exam:dialog-meq}, 
but with three kinds of terms and two kinds of equations. 
The terms are decorated by $\pure$, $\ppg$ and $\ctc$ used as superscripts,
they are called respectively 
\emph{pure} terms, \emph{propagators} and \emph{catchers}. 
The equations are denoted by two distinct relational symbols, 
$\eqs$ for \emph{strong} equations and $\eqw$ for  \emph{weak} equations. 

The \emph{expansion} functor is 
the locally presentable functor $F_{\expand,S}:\catS_\deco\to\catS_\expl$ 
defined in Figure~\ref{fig:expansion} by  
mapping each elementary decorated specification 
(type, decorated term, decorated equation)
to an explicit specification. 
Note: in the explicit specifications the type of exceptions $E$ 
may be duplicated for readability, and 
the superscript $\dec$ stands for any decoration. 
Thus, the expansion provides a meaning for the decorations: 
\begin{description}
\item[$\pure$] a \emph{pure} term may neither raise exceptions 
nor recover form exceptions,
\item[$\ppg$] a \emph{propagator} may raise exceptions but is not allowed to 
recover from exceptions, 
\item[$\ctc$] a \emph{catcher}  
may raise exceptions and recover form exceptions.  
\item[$(\eqs)$] a \emph{strong} equation is an equality of functions 
both on ordinay values and on exceptions
\item[$(\eqw)$]  a \emph{weak} equation is an equality of functions 
only on ordinay values, maybe not on exceptions. 
\end{description}

\begin{remark}
\label{rem:for-short}
It happens that the image of a decorated term by the expansion morphism 
can be characterized by a term, so that we can say ``for short'' that 
the expansion of a catcher $f^\ctc:X\to Y$ ``is'' $f:X+E\to Y+E$, 
the expansion of a propagator $f^\ppg:X\to Y$ ``is'' $f_1:X\to Y+E$
where $f_1=f\circ \inn_X$,  
and the expansion of a pure term $f^\pure:X\to Y$ ``is'' $f_0:X\to Y$.
In a similar way, we say that the expansion of a type $Z$ ``is'' $Z$.
This is stated in the last column of Figure~\ref{fig:expansion}. 
However this may lead to some misunderstanding. Indeed, 
while the image of a specification by the expansion morphism 
must be a specification, the image of a type does not have to be a type
and the image of a term does not have to be a term. 
\end{remark}

\begin{figure}[!ht]   
\renewcommand{\arraystretch}{1.5}   
$$ \begin{array}{|l|l|l|l|} 
\hline 
& \Ss_\deco & F_{\expand,S} \Ss_\deco & F_{\expand,S} \Ss_\deco  \mbox{``for short''} \\ 
\hline 
\mbox{type} & 
\xymatrix@R=.5pc{
\\ 
Z  \\
} & 
\xymatrix@R=.5pc{
Z \ar[d]^{\inn}  \\
Z+E   \\
E \ar[u]_{\ina}  \\
}  &
\xymatrix@R=.5pc{
\\ 
Z  \\
} \\
\hline 
\mbox{catcher} & 
\xymatrix@R=.5pc{
\\ 
X \ar[r]^{f^\ctc} & Y \\
} & 
\xymatrix@R=.5pc{
X \ar[d] & Y \ar[d] \\
X+E \ar[r]^{f} & Y+E  \\
E \ar[u] & E \ar[u] \\
} &
\xymatrix@R=.5pc@C=3pc{
\\ 
X+E \ar[r]^{f} & Y+E \\
}   \\
\hline 
\mbox{propagator} & 
\xymatrix@R=.5pc{
\\ 
X \ar[r]^{f^\ppg} & Y \\
} & 
\xymatrix@R=.5pc{
X \ar[d] & Y \ar[d] \\
X+E \ar[r]^{f} & Y+E  \\
E \ar[u] \ar[r]_{\id} \ar@{}[ru]|{\equiv} & E \ar[u] \\
} &
\xymatrix@R=.5pc@C=5pc{
\\ 
X \ar[r]^{f_1=f\;\circ\;\inn} & Y+E \\
}   \\
\hline 
\mbox{pure term} & 
\xymatrix@R=.5pc{
\\ 
X \ar[r]^{f^\pure} & Y \\
} & 
\xymatrix@R=.5pc{
X \ar[d] \ar[r]^{f_0} \ar@{}[rd]|{\equiv} & Y \ar[d] \\
X+E \ar[r]^(.6){f} & Y+E  \\
E \ar[u] \ar[r]_{\id} \ar@{}[ru]|{\equiv} & E \ar[u] \\
} &
\xymatrix@R=.5pc@C=5pc{
\\ 
X \ar[r]^{f_0} & Y \\
}   \\
\hline 
\mbox{strong equation} & 
\begin{array}{l} f^\dec \eqs g^\dec : \\
  \qquad X \to Y \end{array} &
\begin{array}{l} f\equiv g: \\ 
  \qquad X+E \to Y+E \end{array} & 
f\equiv g \\
\hline 
\mbox{weak equation} & 
\begin{array}{l} f^\dec \eqw g^\dec : \\
  \qquad X \to Y \end{array} &
\begin{array}{l} f\circ\inn_X \equiv g\circ\inn_X: \\ 
  \qquad X \to Y+E \end{array} & 
f_1\equiv g_1 \\
\hline 
\end{array}$$
\renewcommand{\arraystretch}{1}
\caption{The expansion morphism} 
\label{fig:expansion} 
\end{figure}

\begin{figure}[!ht]   
\renewcommand{\arraystretch}{2.3}   
$$ \begin{array}{|c|} 
\hline 
\multicolumn{1}{|l|}
  {\text{(a) Monadic equational rules for exceptions (first part)} } \\ 
\dfrac{f:X\to Y \quad g:Y\to Z}{g\circ f:X\to Z} \qquad 
\dfrac{X}{\id_X:X\to X } \\ 
\dfrac{f:X\to Y \quad g:Y\to Z \quad h:Z\to W}
  {h\circ (g\circ f) \eqs (h\circ g)\circ f} \qquad 
\dfrac{f:X\to Y}{f\circ \id_X \eqs f} \qquad 
\dfrac{f:X\to Y}{\id_Y\circ f \eqs f} \\ 
\dfrac{f}{f \eqs f} \qquad 
\dfrac{f \eqs g}{g \eqs f} \qquad 
\dfrac{f \eqs g \quad g \eqs h}{f \eqs h} \\ 
\dfrac{f:X\to Y \quad g_1\eqs g_2:Y\to Z}
  {g_1\circ f \eqs g_2\circ f :X\to Z}  \qquad 
\dfrac{f_1\eqs f_2:X\to Y \quad g:Y\to Z}
  {g\circ f_1 \eqs g\circ f_2 :X\to Z} \\ 
\hline 
\multicolumn{1}{|l|}
  {\text{(b) Monadic equational rules for exceptions (second part)} } \\ 
\dfrac{f^\pure}{f^\ppg} \qquad 
\dfrac{f^\ppg}{f^\ctc} \qquad 
\dfrac{X}{\id_X^\pure } \qquad 
\dfrac{f^\pure \quad g^\pure}{(g\circ f)^\pure}  \qquad 
\dfrac{f^\ppg \quad g^\ppg}{(g\circ f)^\ppg} \\  
\dfrac{f^\ppg \eqw g^\ppg}{f \eqs g} \qquad
\dfrac{f \eqs g}{f \eqw g} \qquad  
\dfrac{f}{f \eqw f} \qquad 
\dfrac{f \eqw g}{g \eqw f} \qquad 
\dfrac{f \eqw g \quad g \eqw h}{f \eqw h} \\ 
\dfrac{f^\pure:X\to Y \quad g_1\eqw g_2:Y\to Z}
  {g_1\circ f \eqw g_2\circ f }  \qquad 
\dfrac{f_1\eqw f_2:X\to Y \quad g:Y\to Z}
  {g\circ f_1 \eqw g\circ f_2 } \\
\hline 
\multicolumn{1}{|l|}
  {\text{(c) Rules for the propagation of exceptions} } \\ 
\dfrac{k^\ctc:X\to Y}{\toppg{k}^\ppg:X\to Y} \qquad  
\dfrac{k^\ctc:X\to Y}{\toppg{k} \eqw k} \\
\hline 
\multicolumn{1}{|l|}
  {\text{(d) Rules for a decorated initial type $\empt$} } \\ 
\dfrac{X}{\cotu_X:\empt \to X} \qquad  
\dfrac{X}{\cotu_X^\pure} \qquad 
\dfrac{f:\empt \to Y}{f \eqw \cotu_Y} \\
\hline 
\multicolumn{1}{|l|}
  {\text{(e) Rules for case distinction with respect to $X+\empt$}  } \\ 
\;\dfrac{g^\ppg\!:\!X\!\to\! Y \;\; k^\ctc\!:\!\empt\!\to\! Y}
  {\cotuple{g\,|\,k}^\ctc\!:\!X\to Y}
\;\; 
\dfrac{g^\ppg\!:\!X\!\to\! Y \;\; k^\ctc\!:\!\empt\!\to\! Y}
  {\cotuple{g\,|\,k} \eqw g}
\;\;
\dfrac{g^\ppg\!:\!X\!\to\! Y \;\; k^\ctc\!:\!\empt\!\to\! Y}
  {\cotuple{g\,|\,k} \circ \cotu_X \eqs k } \; \\ 
\dfrac{g^\ppg : X\to Y \quad k^\ctc : \empt\to Y 
  \quad f^\ctc : X\to Y \quad  f \eqw g \quad 
  f \circ \cotu_X \eqs k}{f\eqs \cotuple{g\,|\,k}} \\
\hline 
\multicolumn{1}{|l|}
  {\text{(f) Rules for a constitutive coproduct $(q_i^\ppg:X_i\to X)_i$}}\\
\dfrac {(f_i^\ppg : X_i\to Y)_i} 
  {\cotuple{f_i}_i^\ctc : X\to Y} \qquad   
\dfrac {(f_i^\ppg : X_i\to Y)_i} 
  {\cotuple{f_j}_j \circ q_i \eqw f_i } \\ 
\dfrac{(f_i^\ppg : X_i\to Y)_i \quad f^\ctc : X\to Y \quad 
  \forall i \; f \circ q_i \eqw f_i}
  {f \eqs \cotuple{f_j}_j }  \\
\hline 
\end{array}$$
\renewcommand{\arraystretch}{1}
\caption{Decorated rules for exceptions} 
\label{fig:rules} 
\end{figure}

The rules of $\DiaL_\deco$ are given in Figure~\ref{fig:rules}. 
The decoration properties are often grouped 
with other properties: for instance, ``$f^\ppg \eqw g^\ppg$''
means ``$f^\ppg$ and $g^\ppg$ and $f \eqw g$''; 
in addition, the decoration $\ctc$ is usually dropped, since the rules
assert that every term can be seen as a catcher.  
According to Definition~\ref{defi:dialog-morphism}, 
the expansion morphism maps each inference rule of $\DiaL_\expl$ 
to a proof in $\DiaL_\expl$; this provides the meaning of the decorated rules: 
\begin{description}
\item{(a)} 
The first part of the decorated monadic equational rules for exceptions 
are the rules for the monadic equational logic; 
this means that the catchers satisfy the monadic equational rules
with respect to the strong equations. 
\item{(b)} 
The second part of the decorated monadic equational rules for exceptions 
deal with the conversions between decorations 
and with the equational-like properties of pure operations, propagators and 
weak equations. Every strong equation is a weak one 
while every weak equation between propagators is a strong one. 
Weak equations do not form a congruence since 
the substitution rule holds only when the substituted term is pure. 
\item{(c)} 
The rules for the propagation of exceptions  
build a propagator $\toppg{k}$ from any catcher $k$. 
The expansion of $\toppg{k}$ is defined as 
$\cotuple{k\circ \inn_X | \ina_X}:X+E\to Y+E$: 
it coincides with the expansion of $k$ on $X$ and it propagates
exceptions without catching them, otherwise. 
\item{(d)} 
The rules for a decorated initial type $\empt$ 
together with the rules in (b) imply that every propagator from $\empt$ 
to any $X$ is strongly equivalent to $\cotu_X$. 
The expansion of $\empt$ and $\cotu_X^\pure$ 
are the initial type $\empt$ and the term $\cotu_X$, respectively, 
as in remark~\ref{rem:expl-empt}. 
\item{(e)} 
The pure coproduct ($\id_X:X\to X+\empt \lto \empt : \cotu_X$)  
has decorated coproduct properties which are given by 
the rules for the case distinction with respect to $X+\empt$.  
The expansion of $\cotuple{g|k}^\ctc : X\to Y$ 
is the case distinction $\cotuple{g_1|k}: X+E\to Y+E$ with respect to $X+E$ 
(where $\empt+E$ is identified with $E$, so that $k:E\to Y+E$).
This can be illustrated as follows, 
by a diagram in the decorated logic (on the left)
or in the explicit logic (on the right); 
more details are given in Remark~\ref{rem:expansion}. 
\begin{equation}
\label{diag:expansion} 
  \xymatrix@C=2.5pc@R=1.5pc{
  X \ar[d]_(.4){\id^\pure} \ar[rrrd]^{g^\ppg} &&& \\
  X \ar[rrr]^(.4){\cotuple{g|k}^\ctc } &
    \ar@{}[ul]|(.4){\eqw} \ar@{}[dl]|(.4){\eqs} && Y \\ 
  \empt \ar[u]^{\cotu^\pure} \ar[rrru]_{k^\ctc} &&& \\
  } \qquad\qquad \xymatrix@C=2pc@R=1.5pc{
  X \ar[d]_{\inn} \ar[rrrd]^{g_1} &&& \\
  X+E \ar[rrr]^(.4){\cotuple{g_1| k } } &
    \ar@{}[ul]|(.4){\equiv} \ar@{}[dl]|(.4){\equiv} && Y+E \\ 
  E \ar[u]^{\ina} \ar[rrru]_{k} &&& \\
  } 
\end{equation} 
\item{(f)} 
The rules for a constitutive coproduct 
build a catcher from a family of propagators. 
Whenever $(q_i^\ppg:X_i\to X)_i$ is a constitutive coproduct 
the family $(q_{i,1}:X_i\to X+E)_i$ is a coproduct with respect to 
the explicit logic. 
\end{description}

\begin{remark}
\label{rem:expansion}
Let us give some additional information on the expansion of 
the decorated rules (e) in Figure~\ref{fig:rules},
i.e., the decorated rules for case distinction with respect to $X+\empt$.
According to the definition of the expansion morphism 
on specifications (Figure~\ref{fig:expansion}) 
since the cocone 
$(\id_X^\pure: X \to X+\empt \lto \empt : \cotu_X^\pure)$
is made of pure terms, we can say ``for short'' that 
its expansion ``is'' simply
$(\id_{X,0}: X \to X+\empt \lto \empt : \cotu_{X,0})$. 
However in order to check that  
the decorated rules (e) in Figure~\ref{fig:rules}
are mapped by the expansion morphism to explicit proofs 
we have to take into account another coproduct in the explicit logic. 
Rules (e) state that
\textit{for each propagator $g^\ppg:X\to Y$ and each catcher $k^\ctc:\empt\to Y$ 
there is a catcher $h^\ctc: X\to Y$ ($h$ is denoted $\cotuple{g|k} $
in Figure~\ref{fig:rules}) 
such that $h \eqw g$ and $h \circ \cotu_X \eqs k$,
and that in addition $h$ is,  up to strong equivalence, 
the unique catcher satisfying these conditions.}  
Thus, according to Figure~\ref{fig:expansion},  
the expansion of these rules must be such that 
\textit{for each terms $g_1:X\to Y+E$ and $k:E\to Y+E$ 
there is a term $h: X+E\to Y+E$ 
such that $h\circ \inn_X \equiv g\circ \inn_X$ and 
$h \circ \ina_X \equiv k$, 
and that in addition $h$ is,  up to equivalence, 
the unique term satisfying these conditions.}
Clearly, this is satisfied when $h=\cotuple{g_1|h}$ 
is obtained by case distinction with respect to the coproduct 
$(\inn_X: X \to X+E \lto E : \ina_X)$. 
It follows that we can also say, ``for short'', that the image 
of the coproduct 
$(\id_X: X \to X+\empt \lto \empt : \cotu_X)$
by the expansion morphism ``is'' the coproduct 
$(\inn_X: X \to X+E \lto E : \ina_X)$, as in diagram~(\ref{diag:expansion}). 
\end{remark}

The decorated rules are now used for proving a lemma
that will be used in Section~\ref{subsec:deco-spec}. 

\begin{lemma}
\label{lem:coprod-cotu} 
For each propagator $g^\ppg:X\to Y$ we have 
$g\circ \cotu_X \eqs \cotu_Y$ and $ g \eqs \cotuple{g\,|\,\cotu_Y}$. 
\end{lemma}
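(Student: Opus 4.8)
The plan is to prove the two strong equations separately, using only the decorated rules collected in Figure~\ref{fig:rules}, and to feed the first equation into the proof of the second.

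For $g\circ \cotu_X \eqs \cotu_Y$, I would first note that both sides are propagators from $\empt$ to $Y$. Indeed $\cotu_X^\pure$ and $\cotu_Y^\pure$ hold by the rules~(d), hence both are propagators via the conversion $\frac{f^\pure}{f^\ppg}$ of~(b); then $(g\circ \cotu_X)^\ppg$ follows from the composition rule for propagators in~(b), using the hypothesis $g^\ppg$. Since $g\circ \cotu_X$ has source $\empt$, the rule $\frac{f:\empt\to Y}{f\eqw\cotu_Y}$ of~(d) gives the weak equation $g\circ \cotu_X \eqw \cotu_Y$ immediately (intuitively, $\empt$ carries no ordinary values on which two such terms could disagree). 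Finally the upgrade rule $\frac{f^\ppg\eqw g^\ppg}{f\eqs g}$ of~(b), applicable because both sides are now known to be propagators, converts this into the desired strong equation.

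For $g \eqs \cotuple{g\,|\,\cotu_Y}$ I would invoke the uniqueness clause of the case-distinction rules~(e), taking $k=\cotu_Y$ (a catcher, being pure) and testing the candidate $f=g$ itself. The hypotheses of that rule are then checked in turn: $g^\ppg$ is given; $g^\ctc$ follows by $\frac{f^\ppg}{f^\ctc}$; $\cotu_Y^\ctc$ holds since $\cotu_Y$ is pure; the weak equation $g\eqw g$ is reflexivity from~(b); and the last hypothesis $g\circ \cotu_X \eqs \cotu_Y$ is exactly the equation established in the previous paragraph. The uniqueness rule then yields $g\eqs \cotuple{g\,|\,\cotu_Y}$.

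I do not expect a genuine obstacle: both equations reduce to careful bookkeeping of decorations plus a single appeal to the uniqueness of case distinction. The one point deserving care is the asymmetry between $\eqw$ and $\eqs$. One cannot obtain $g\eqs\cotuple{g\,|\,\cotu_Y}$ directly from the existence clause of~(e), which only supplies $\cotuple{g\,|\,\cotu_Y}\eqw g$, because $\cotuple{g\,|\,\cotu_Y}$ is a priori only a catcher and the weak-to-strong rule of~(b) demands that both sides be propagators. Routing through the uniqueness clause, with the first equation as its crucial hypothesis, is precisely what makes the strong equation available.
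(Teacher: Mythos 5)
Your proof is correct and follows essentially the same route as the paper's: the first equation is obtained via the initiality rule of~(d) giving a weak equation, upgraded to a strong one by the weak-to-strong rule of~(b) after checking both sides are propagators, and the second equation is obtained from the uniqueness clause of~(e) with candidate $f=g$ and $k=\cotu_Y$, using the first equation as the crucial hypothesis. Your closing observation about why the existence clause alone would not suffice is a sound reading of the decoration constraints, and matches the structure of the paper's argument.
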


\begin{proof}
In these proofs the labels refer to the kind of rules which are used:
either $(a)$, $(b)$, $(d)$ or $(e)$.  
First, let us prove that $g\circ \cotu_X \eqs \cotu_Y$: 
\small
\begin{prooftree}
\AxiomC{$X$}
\LeftLabel{$(d)\;$} 
\UnaryInfC{$\cotu_X : \empt \to X$}
    \AxiomC{$g:X\to Y$}
  \LeftLabel{$(a)\;$} 
  \BinaryInfC{$g\circ \cotu_X : \empt \to Y$}
  \LeftLabel{$(d)\;$} 
  \UnaryInfC{$g\circ \cotu_X \eqw \cotu_Y$}
         \AxiomC{$g^\ppg$}    
                      \AxiomC{$X$}    
                      \LeftLabel{$(d)\;$} 
                      \UnaryInfC{$\cotu_X^\pure$}
                      \LeftLabel{$(b)\;$} 
                      \UnaryInfC{$\cotu_X^\ppg$}
                \LeftLabel{$(b)\;$} 
                \BinaryInfC{$(g\circ \cotu_X)^\ppg$}
                            \AxiomC{$Y$}    
                            \LeftLabel{$(d)\;$} 
                            \UnaryInfC{$\cotu_Y^\pure$}
                            \LeftLabel{$(b)\;$} 
                            \UnaryInfC{$\cotu_Y^\ppg$}
      \LeftLabel{$(b)\;$} 
      \TrinaryInfC{$g\circ \cotu_X \eqs \cotu_Y$}
\end{prooftree}
\normalsize
This first result is the unique non-obvious part in the proof 
of $g\eqs\cotuple{g\,|\,\cotu_Y}$: 
\small
\begin{prooftree}
\AxiomC{$g^\ppg : X\to Y$} 
\AxiomC{$Y$}    
\LeftLabel{$(d)\;$} 
\UnaryInfC{$\cotu_Y^\pure : \empt\to Y$}
\LeftLabel{$(b)\;$} 
\UnaryInfC{$\cotu_Y^\ppg : \empt\to Y$}
\LeftLabel{$(b)\;$} 
\UnaryInfC{$\cotu_Y^\ctc : \empt\to Y$}
\AxiomC{$g^\ppg : X\to Y$}
\LeftLabel{$(b)\;$} 
\UnaryInfC{$g^\ctc : X\to Y$}
\AxiomC{$g$}
\LeftLabel{$(b)\;$} 
\UnaryInfC{$g \eqw g$}
\AxiomC{$\vdots$}
\UnaryInfC{$g\circ \cotu_X \eqs \cotu_Y$}
\LeftLabel{$(e)\;$} 
\QuinaryInfC{$g\eqs \cotuple{g\,|\,\cotu_Y}$} 
\end{prooftree}
\normalsize
\end{proof}

\begin{remark}
\label{rem:sketches}
The morphism of limit sketches $\ske:\skE_S\to\skE_T$
which induces the decorated logic is easily guessed.
This is outlined below, more details are given in  
a similar exercice in \cite{DD12-param}. 
The description of $\skE_S$ can be read from 
the second column of Figure~\ref{fig:expansion}. 
There is in the limit sketch $\skE_S$ a point for each 
elementary decorated specification 
and an arrow for each morphism between the elementary specifications, 
in a contravariant way. 
For instance $\skE_S$ has points \emph{type} and \emph{catcher}, 
and it has arrows \emph{source} and \emph{target}
from \emph{catcher} to \emph{type}, 
corresponding to the morphisms from 
the decorated specification $Z$ to the 
decorated specification $f^\ctc:X\to Y$ 
which map $Z$ respectively to $X$ and $Y$.  
As usual, some additional points, arrows and distinguished cones  
are required in $\skE_S$.  
The description of $\ske$ can be read from Figure~\ref{fig:rules}. 
The morphism $\ske$ adds inverses to arrows in $\skE_S$ 
corresponding to the inference rules, 
in a way similar to Example~\ref{exam:dialog-meq} 
but in a contravariant way. 
\end{remark}

\begin{remark}
\label{rem:dual}
In  the short note \cite{DDFR12a-short} it is checked that,
from a denotational point of view, 
the functions for tagging and untagging exceptions are respectively 
\emph{dual},
in the categorical sense, 
to the functions for looking up and updating states.  
This duality relies on the fact that 
the states are \emph{observed} thanks to the lookup operations 
while dually the exceptions are \emph{constructed} thanks to 
the tagging operations.
Thus, the duality between states and exceptions stems from the duality 
between the comonad $X\times S$ (for some fixed $S$)
and the monad $X+E$ (for some fixed $E$). 
It happens that this duality also holds from the decorated 
point of view. 

Most of the decorated rules for exceptions
are dual to the decorated rules for states in \cite{DDFR12b-state}. 
For instance, the unique difference between 
the monadic equational rules for exceptions 
(parts (a) and (b) of Figure~\ref{fig:rules}) 
and the dual rules for states in \cite{DDFR12b-state} 
lies in the congruence rules for the weak equations:
for states the replacement rule is restricted to pure $g$,
while for exceptions it is 
the substitution rule which is restricted to pure $f$. 
The rules for a decorated initial type 
and for a constitutive coproduct
(parts (d) and (f) of Figure~\ref{fig:rules}) 
are respectively dual to the rules for a decorated final type 
and the rules for an observational product in \cite{DDFR12b-state}. 
The rules for the propagation of exceptions 
and for the case distinction with respect to $X+\empt$ 
(parts (c) and (e) of Figure~\ref{fig:rules}) 
are used only for the construction 
of the handling operations from the untagging operations; 
these rules have no dual in \cite{DDFR12b-state} for states. 
\end{remark}

\begin{remark}
\label{rem:monad} 
For a while, let us forget about the three last families of rules
in  Figure~\ref{fig:rules}, 
which involve some kind of decorated coproduct. 
Then any monad $T$ on any category $\catC$ provides 
a decorated theory $\catC_T$, as follows. 
The types are the objects of $\catC$,
a pure term $f^\pure:X\to Y$ is a morphism  $f:X\to Y$ in $\catC$,
a propagator $f^\ppg:X\to Y$ is a morphism  $f:X\to TY$ in $\catC$,
a catcher $f^\ctc:X\to Y$ is a morphism  $f:TX\to TY$ in $\catC$.
The conversion from pure to propagator uses the unit of $T$ 
and the conversion from propagator to catcher uses the multiplication of $T$. 
Composition of propagators is done in the Kleisli way. 
A strong equation $f^\ctc\eqs g^\ctc:X\to Y$ is an equality 
$f=g:TX\to TY$ in $\catC$  
and a weak equation $f^\ctc\eqw g^\ctc:X\to Y$ is an equality 
$f\circ\eta_X=g\circ\eta_X:X\to TY$ in $\catC$, 
where $\eta$ is the unit of the monad. 
It is easy to check that the decorated monadic equational rules of $\DiaL_\deco$ 
are satisfied, as well as the rules for the propagation of exceptions 
if $\toppg{k}=k\circ \eta_X:X\to TY$ for each $k:TX\to TY$. 
\end{remark}

\subsection{Decorated specification for exceptions}
\label{subsec:deco-spec}

Let us define a decorated specification $\Ss_\deco$ for exceptions,  
which (like $\Ss_\expl$ in Section~\ref{subsec:expl-spec}) 
defines the raising and handling operations 
in terms of the core tagging and untagging operations. 

\begin{definition}
\label{defi:deco-spec}
Let $\Sig_\pu$ be a signature.  
Given a set of indices $I$ and 
a type $P_i$ in $\Sig_\pu$ for each $i\in I$, 
the \emph{decorated specification for exceptions} $\Ss_\deco$ 
is the $\DiaL_\deco$-specification 
made of $\Sig_\pu$ with its operations decorated as pure 
together with, for each $i\in I$, 
a propagator $t_i^{\ppg}:P_i\to\empt$ and  
a catcher $c_i^{\ctc}: \empt \to P_i$  
with the weak equations   
$ c_i \circ t_i \eqw \id : P_i \to P_i $
and 
$ c_i \circ t_j \eqw \cotu \circ t_j : P_j \to P_i $ for all $j\ne i$.
Then for each $i\in I$
the \emph{raising} propagator 
$ (\throw{i}{Y})^\ppg :P_i\to Y $ 
for each type $Y$ in $\Sig_\pu$ 
  is:
	$$ \throw{i}{Y} = \cotu_Y \circ t_i$$ 
and the \emph{handling} propagator 
  $ (\try{f}{\catchn{i_1}{g_1}{i_n}{g_n}})^\ppg  : X \to Y $
  for each propagator $f^\ppg:X\to Y$, 
  each non-empty list of indices $(i_1,\dots,i_n)$ 
  and each propagators $g_j^\ppg:P_{i_j}\to Y$ for $j=1,\dots,n$ 
  is defined  as:
	$$ \try{f}{\catchn{i_1}{g_1}{i_n}{g_n}} = 
         \toppg{ \TRY{f}{\catchn{i_1}{g_1}{i_n}{g_n}} } $$
  from a catcher $ \TRY{f}{\catchn{i_1}{g_1}{i_n}{g_n}}:X\to Y$ 
  which is defined as follows in two steps: 	
\begin{description}
\item[(try)] the catcher $ \TRY{f}{k} : X\to Y $ is defined 
for any catcher $k: \empt\to Y$ by: 
  $$
  (\TRY{f}{k})^\ctc \;=\; 
    \cotuple{\; \id_Y^\pure \;|\; k^\ctc \;}^\ctc \circ f^\ppg 
  $$
\item[(catch)] the catcher  $\catchn{i_1}{g_1}{i_n}{g_n}: \empt\to Y$ 
is obtained by setting $p=1$ in the family of catchers 
$k_p = \catchn{i_p}{g_p}{i_n}{g_n} : \empt\to Y$ 
(for $p=1,\dots,n+1$) which are defined recursively by: 
  $$ k_p^\ctc \;=\;  
    \begin{cases} 
      \cotu_Y^\pure & \mbox{ when } p=n+1 \\
      \cotuple{\; g_p^\ppg \;|\; k_{p+1}^\ctc \;}^\ctc  \circ c_{i_p}^\ctc & 
		   \mbox{ when } p\leq n \\
   \end{cases} $$
\end{description}
\end{definition}

\begin{remark}
\label{rem:try}
Let $ h = \try{f}{\catchn{i_1}{g_1}{i_n}{g_n}}$
and $H = \TRY{f}{\catchn{i_1}{g_1}{i_n}{g_n}}$. 
Then $h$ is a propagator and $H$ is a catcher, 
and the definition of $h$ is given in terms of $H$, as $h = \toppg{H}$. 
The expansions of $h$ and $H$ are functions 
from $X+E$ to $Y+E$ which coincide on $X$
but differ on $E$: while $h$ propagates exceptions, 
$H$ catches exceptions according to the pattern 
$\catchn{i_1}{g_1}{i_n}{g_n}$. 
\end{remark}

\begin{remark}
Since $k_{n+1} = \cotu_Y$, by Lemma~\ref{lem:coprod-cotu} 
we have $\cotuple{g_n|k_{n+1}} \eqs g_n$. 
It follows that when $n=1$ and 2 we get respectively:
\begin{gather}
\label{eq:deco-handle-one}
\try{f}{\catch{i}{g}} \;\eqs\;  
   \bigtoppg{ 
    \left(\; \bigcotuple{ \id_Y \;|\; g \circ c_i } 
      \circ f \;\right) } \\
\label{eq:deco-handle-two}
\try{f}{\catchij{i}{g}{j}{h}} \;\eqs\;  
   \bigtoppg{
    \left(\;  
    \bigcotuple{\id \;|\; 
      \cotuple{g \;|\; 
        h \circ c_j} 
      \circ c_i} 
    \circ f\;\right) } 
\end{gather}
When $n=1$ this can be illustrated as follows,
with $\TRY{f}k$ on the left and $k=\catch{i}{g}$ on the right:
$$ \xymatrix@C=2.5pc@R=1.5pc{
  & Y \ar[d]_(.4){\id^\pure} \ar[rrrd]^{\id^\pure} &&& \\
  X \ar[r]^(.4){f^\ppg} & 
    Y \ar[rrr]^(.4){\cotuple{\id|k}^\ctc } &
    \ar@{}[ul]|(.4){\eqw} \ar@{}[dl]|(.4){\eqs} && Y \\ 
  & \empt \ar[u]^{\cotu^\pure} \ar[rrru]_{k^\ctc} &&& \\
  } \qquad
 \xymatrix@C=2.5pc@R=1.5pc{
  & P_i \ar[d]_(.4){\id^\pure} \ar[rrrd]^{g^\ppg} &&& \\
  E \ar[r]^(.4){c_i^\ctc} & 
    P_i  
      \ar[rrr]^(.4){\cotuple{g|\cotu}^\ctc} &
    \ar@{}[ul]|(.4){\eqw} \ar@{}[dl]|(.4){\eqs} && Y \\ 
  & \empt \ar[u]^{\cotu^\pure} \ar[rrru]_{\cotu^\pu} &&& \\
  }  $$
\end{remark}

\begin{lemma}
\label{lem:Ss} 
Let $\Sig_\pu$ be a signature, $I$ a set and  
$P_i$ a type in $\Sig_\pu$ for each $i\in I$.
Let $\Ss_\expl$ be the corresponding explicit specification for exceptions
(Definition~\ref{defi:expl-spec}) and 
$\Ss_\deco$ the corresponding decorated specification for exceptions
(Definition~\ref{defi:deco-spec}). 
Then $\Ss_\expl = F_{\expand}\Ss_\deco$.    
\end{lemma}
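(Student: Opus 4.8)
The plan is to establish the equality $\Ss_\expl = F_{\expand}\Ss_\deco$ componentwise, by computing the image under the expansion morphism of each generator and each equation of $\Ss_\deco$ (Definition~\ref{defi:deco-spec}) from the table in Figure~\ref{fig:expansion}, and checking that one recovers exactly the generators and equations of $\Ss_\expl$ (Definition~\ref{defi:expl-spec}). Since $F_{\expand}$ is a morphism of diagrammatic logics it maps specifications to specifications and, being a locally presentable (hence left adjoint) functor, preserves the colimits used to amalgamate elementary pieces; so it suffices to treat those elementary pieces: the pure signature $\Sig_\pu$, the two core operations $t_i^\ppg$ and $c_i^\ctc$, and the two families of weak equations. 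Throughout I would use the identification $\empt+E\cong E$ arising from the expansion of the decorated initial type, together with the fact (Remark~\ref{rem:expansion}) that the decorated coproduct $(\id_X:X\to X+\empt\lto\empt:\cotu_X)$ expands to the explicit coproduct $(\inn_X:X\to X+E\lto E:\ina_X)$.

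First I would treat the pure part and the core operations. By the \emph{pure term} row of Figure~\ref{fig:expansion}, a pure term $f^\pure:X\to Y$ expands to its underlying map $f_0:X\to Y$, so $\Sig_\pu$, all of whose operations are decorated pure, is sent to $\Sig_\pu$, matching the pure part of $\Ss_\expl$. By the \emph{propagator} row, the tagging propagator $t_i^\ppg:P_i\to\empt$ expands to a term $P_i\to\empt+E$, which under $\empt+E\cong E$ is precisely the tagging operation $t_i:P_i\to E$ of $\Ss_\expl$. Dually, by the \emph{catcher} row, the untagging catcher $c_i^\ctc:\empt\to P_i$ expands to a term $\empt+E\to P_i+E$, that is $c_i:E\to P_i+E$, the untagging operation of $\Ss_\expl$.

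Next I would expand the two families of weak equations using the \emph{weak equation} row, which sends $u\eqw v$ to $u\circ\inn\equiv v\circ\inn$. For $c_i\circ t_i\eqw\id:P_i\to P_i$, the composite catcher $c_i\circ t_i$ restricts along $\inn_{P_i}$ to the explicit composite $c_i\circ t_i$, while the pure $\id$ restricts to $\inn_{P_i}$; hence the image is the equation $c_i\circ t_i\equiv\inn_{P_i}$ of $\Ss_\expl$. For $c_i\circ t_j\eqw\cotu_{P_i}\circ t_j$ with $j\neq i$, the left-hand side again restricts to $c_i\circ t_j$; the key computation is on the right-hand side, where the pure term $\cotu_{P_i}^\pure:\empt\to P_i$ expands to $\ina_{P_i}:E\to P_i+E$ (its $\empt$-component being vacuous once $\empt+E\cong E$, so that only the propagation of exceptions survives), whence $\cotu_{P_i}\circ t_j$ restricts to $\ina_{P_i}\circ t_j$. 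The image is therefore $c_i\circ t_j\equiv\ina_{P_i}\circ t_j$, the second family of equations of $\Ss_\expl$.

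Having matched all generators and equations, I would conclude the equality of the two specifications. The raising and handling operations are not additional generators but are defined by formulas in $t_i$ and $c_i$; one checks that these formulas correspond under the expansion — for instance $\cotu_Y\circ t_i$ expands to $\ina_Y\circ t_i$, recovering the explicit raising operation — so they require no separate argument. The main obstacle I anticipate is the bookkeeping around the initial type: one must carefully distinguish the ``for short'' images $f_0$, $f_1$ and $f$ of Figure~\ref{fig:expansion} from the full cocone-level maps $X+E\to Y+E$ on which composition is actually computed, and in particular justify the identification $\empt+E\cong E$ and the resulting fact that $\cotu_{P_i}^\pure$ expands to $\ina_{P_i}$ (Remark~\ref{rem:expansion}). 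This last fact is exactly what converts the decorated weak equation $c_i\circ t_j\eqw\cotu_{P_i}\circ t_j$ into the explicit equation $c_i\circ t_j\equiv\ina_{P_i}\circ t_j$, and is the crux of the verification.
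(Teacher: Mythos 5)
Your proof is correct and takes essentially the same approach as the paper: the paper's own proof consists exactly of the observation that the specification is a colimit of elementary specifications and that $F_{\expand}$, being a left adjoint, preserves colimits, leaving the componentwise verification as ``easy to check.'' Your piecewise computation via Figure~\ref{fig:expansion} --- including the identification $\empt+E\cong E$ and the expansion of $\cotu_{P_i}^\pure$ to $\ina_{P_i}$, which converts the decorated weak equations into the explicit equations of $\Ss_\expl$ --- is precisely that verification, carried out in detail.
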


\begin{proof} 
This is easy to check: in Definition~\ref{defi:expl-spec} $\Ss_\expl$
is described as a colimit of elementary specifications, 
and $F_{\expand}$, as any left adjoint functor, preserves colimits. 
\end{proof}

\begin{proposition}
\label{prop:deco-expand}  
The functor $F_{\expand,S}:\catS_\deco\to\catS_\expl$  
defined in Figure~\ref{fig:expansion} is locally presentable
and it determines a morphism of logics $F_\expand:\DiaL_\deco\to\DiaL_\expl$. 
\end{proposition}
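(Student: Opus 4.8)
The plan is to obtain both assertions --- local presentability of $F_{\expand,S}$ and the fact that it underlies a morphism of logics --- by producing the relevant morphisms of limit sketches and checking that they fit into a commutative square, as required by Definition~\ref{defi:dialog-morphism}. The guiding observation is that a locally presentable functor is a left adjoint, hence preserves colimits, and that the contravariant Yoneda image generates the category of realizations under colimits (Section~\ref{subsec:dialog-dialog}); consequently such a functor is determined up to natural isomorphism by its restriction to the elementary specifications. Figure~\ref{fig:expansion} supplies exactly this restriction for $F_{\expand,S}$, so the remaining work is to check that this data is coherent and arises from a sketch morphism.

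First I would address local presentability of $F_{\expand,S}$. Using the description of the limit sketch $\skE_{\deco,S}$ of Remark~\ref{rem:sketches} (there denoted $\skE_S$) --- whose points are the elementary decorated specifications (type, catcher, propagator, pure term, strong and weak equation) and whose arrows encode the morphisms between them --- I read off a morphism of limit sketches $u_S:\skE_{\deco,S}\to\skE_{\expl,S}$ from the second column of Figure~\ref{fig:expansion}: each point is sent to the explicit specification listed there, and each structural arrow (source, target, composition, the cocone arrows $\inn$ and $\ina$) is sent accordingly. Here one checks the routine compatibilities, for instance that the source and target arrows of a catcher are carried to the source and target of the expanded term $f:X+E\to Y+E$. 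It then follows that $F_{\expand,S}$ agrees, up to natural isomorphism, with the induced functor $F_{u_S}$, and is therefore locally presentable.

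For the morphism of logics, Definition~\ref{defi:dialog-morphism} requires a second locally presentable functor $F_{\expand,T}:\catT_\deco\to\catT_\expl$ together with a commutative square of limit sketches relating the two inference systems. Writing $\ske_\deco:\skE_{\deco,S}\to\skE_{\deco,T}$ and $\ske_\expl:\skE_{\expl,S}\to\skE_{\expl,T}$ for the localizations that induce $\DiaL_\deco$ and $\DiaL_\expl$ (which exist by Definition~\ref{defi:dialog-dialog}), and defining $u_T:\skE_{\deco,T}\to\skE_{\expl,T}$ on theories in the same style as $u_S$, I must establish commutativity of
$$\xymatrix@C=4pc@R=2.5pc{
\skE_{\deco,S}\ar[r]^{u_S}\ar[d]_{\ske_\deco} & \skE_{\expl,S}\ar[d]^{\ske_\expl}\\
\skE_{\deco,T}\ar[r]_{u_T} & \skE_{\expl,T}}$$
Once this square commutes it induces the square of left adjoints $(\DiaL_\deco,\DiaL_\expl,F_{\expand,S},F_{\expand,T})$, both horizontal functors are locally presentable because they come from sketch morphisms, and $F_\expand=(F_{\expand,S},F_{\expand,T})$ is then by definition a morphism of logics.

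The main obstacle is the commutativity of this square. Since $\ske_\deco$ and $\ske_\expl$ are precisely the morphisms that turn the inference rules into invertible arrows, commutativity is equivalent to the statement that the expansion carries each inference rule of $\DiaL_\deco$ to a valid proof in $\DiaL_\expl$; by the remark following Definition~\ref{defi:dialog-morphism} it suffices to check this on the elementary rules, namely the families (a)--(f) of Figure~\ref{fig:rules}. I would verify these one family at a time using the expansion dictionary of Figure~\ref{fig:expansion}: the monadic equational and conversion rules (a)--(b) expand to ordinary equational reasoning on the coproducts $X+E$, while the genuinely effectful rules require more care --- the propagation rules (c) build a propagator $\toppg{k}$ from a catcher $k$ whose expansion is $\cotuple{k\circ\inn_X\,|\,\ina_X}$, and the case-distinction rules (e) expand to case distinction along $(\inn_X:X\to X+E\lto E:\ina_X)$, exactly as spelled out in Remark~\ref{rem:expansion}; the coproduct rules (d) and (f) are handled similarly. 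Completing this rule-by-rule check is the heart of the proof, everything else being the formal bookkeeping of sketch morphisms described above.
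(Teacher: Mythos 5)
Your proposal is correct and follows essentially the same route as the paper: local presentability is read off from Figure~\ref{fig:expansion} (the figure's second column being precisely the data of a sketch morphism inducing $F_{\expand,S}$), and the morphism of logics is obtained from the rule-by-rule verification that each decorated rule in families (a)--(f) of Figure~\ref{fig:rules} expands to an explicit proof, which is exactly the check the paper invokes (``it has been checked that $F_{\expand,S}$ maps each decorated inference rule to an explicit proof'') before extending to $F_{\expand,T}$. Your version merely makes explicit the sketch-level bookkeeping that the paper's two-sentence proof leaves implicit.
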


\begin{proof} 
The fact that $F_{\expand,S}$ is locally presentable is deduced 
from its definition in Figure~\ref{fig:expansion}. 
It has been checked that $F_{\expand,S}$ maps each decorated inference rule 
to an explicit proof, thus it can be extended 
as $F_{\expand,T}:\catT_\deco\to\catT_\expl$ in such a way that the pair 
$F_\expand=(F_{\expand,S},F_{\expand,T})$ is a morphism of logics. 
\end{proof}

\begin{definition}
\label{defi:deco-expand}  
The morphism $F_\expand:\DiaL_\deco\to\DiaL_\expl$ is called 
the \emph{expansion} morphism. 
\end{definition}

\subsection{The decorated model provides the intented semantics of exceptions}
\label{subsec:deco-thm}

Following Definition~\ref{defi:expl-model}, 
the intended semantics of exceptions 
is a model with respect to the explicit logic. 
Theorem~\ref{thm:deco-model}  will prove that the intended semantics 
of exceptions 
can also be expressed as a model with respect to the decorated logic. 

\begin{definition}
\label{defi:deco-set}
For any set $E$, called the \emph{set of exceptions},  
we define a decorated theory $\Set_{E,\deco}$ as follows. 
A type is a set,
a pure term $f^\pure:X\to Y$ is a function $f:X\to Y$, 
a propapagator $f^\ppg:X\to Y$ is a function $f:X\to Y+E$, 
and a catcher $f^\ctc:X\to Y$ is a function $f:X+E\to Y+E$. 
It follows that in $\Set_{E,\deco}$ 
every pure term $f:X\to Y$ gives rise to 
a propagator $\inn_Y\circ f:X\to Y+E$ 
and that every propagator $f:X\to Y+E$
gives rise to a catcher $[f|\ina_Y]:X+E\to Y+E$.
By default, $f$ stands for $f^\ctc$. 
The equations are defined when both members are catchers, 
the other cases follow thanks to the conversions above. 
A strong equation $f\eqs g:X\to Y$ is the equality of functions 
$f=g:X+E\to Y+E$
and a weak equation $f\eqw g:X\to Y$ is the equality of functions 
$f\circ\inn_X=g\circ\inn_X:X\to Y+E$.
\end{definition}

\begin{lemma}
\label{lem:Tt} 
Let $G_{\expand,T}$ be the right adjoint to $F_{\expand,T}$. 
Then $\Set_{E,\deco} = G_{\expand,T}\Set_{E,\expl}$. 
\end{lemma}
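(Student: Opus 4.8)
The plan is to identify the decorated theory $G_{\expand,T}\Set_{E,\expl}$ feature by feature and to check that each feature coincides with the one prescribed by Definition~\ref{defi:deco-set}. Since $F_{\expand,T}$ is a locally presentable functor induced by the theory-side morphism of limit sketches underlying the expansion, its right adjoint $G_{\expand,T}$ is, up to the equivalences of Section~\ref{subsec:dialog-dialog}, given by precomposition with that sketch morphism; equivalently, the components of $G_{\expand,T}\Set_{E,\expl}$ at any elementary decorated specification can be read off through the adjunction. I would make this explicit using Proposition~\ref{prop:deco-morphism}: for every decorated specification $\Ss$ there is a bijection, natural in $\Ss$,
$$ \Mod_{\DiaL_\deco}(\Ss, G_{\expand,T}\Set_{E,\expl}) \cong \Mod_{\DiaL_\expl}(F_{\expand,S}\Ss, \Set_{E,\expl}) \;. $$

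First I would let $\Ss$ range over the elementary decorated specifications listed in the second column of Figure~\ref{fig:expansion} --- a type, a pure term, a propagator, a catcher, a strong equation and a weak equation --- and compute the right-hand side from the image $F_{\expand,S}\Ss$ together with the fact that $\Set_{E,\expl}$ interprets the cocone $X \to X+E \lto E$ as the disjoint union (Definition~\ref{defi:expl-set}). A model of the image of a type is just a set; a model of the image of a pure term $f^\pure:X\to Y$ reduces to a function $f_0:X\to Y$; a model of the image of a propagator $f^\ppg:X\to Y$ reduces to a function $f_1:X\to Y+E$; and a model of the image of a catcher $f^\ctc:X\to Y$ is a function $f:X+E\to Y+E$. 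Likewise a model of the image of a strong equation is an equality $f=g:X+E\to Y+E$, and a model of the image of a weak equation is an equality $f\circ\inn_X=g\circ\inn_X:X\to Y+E$. These are exactly the types, the terms of each decoration, and the equations of each kind given in Definition~\ref{defi:deco-set}, so the two decorated theories carry the same underlying families of features.

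Finally I would check that the two theories share the same \emph{structure}, not merely the same families of features. The point is that the bijections above are natural in $\Ss$, hence compatible with every morphism between elementary decorated specifications: with the source and target maps, with composition and identities, with the conversions from pure terms to propagators ($f_0\mapsto\inn_Y\circ f_0$) and from propagators to catchers ($f_1\mapsto\cotuple{f_1\,|\,\ina_Y}$), and with the coproduct and initial-type structure of rules~(d),~(e),~(f). Since Definition~\ref{defi:deco-set} defines precisely these conversions and since $\Set_{E,\expl}$ realises the coproduct cocones as disjoint unions, naturality forces every structure map of $G_{\expand,T}\Set_{E,\expl}$ to agree with the corresponding one of $\Set_{E,\deco}$. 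I expect the only genuinely delicate step to be this last one: arguing that matching the components at elementary specifications, together with naturality of the adjunction bijection, yields the \emph{equality} of the two objects of $\catT_\deco$ rather than a mere isomorphism. This is where one must invoke that a theory is presented by itself (the counit $\eps$ being an isomorphism), so that it is reconstructed from the families of features indexed by the points of the limit sketch $\skE_S$ and by the arrows between them.
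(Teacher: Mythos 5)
Your proposal is correct in substance and rests on the same key fact as the paper's proof: the right adjoint $G_{\expand,T}$ is precomposition with the morphism of limit sketches $\varphi_\expand$ underlying $F_{\expand,T}$, which is read off from Figure~\ref{fig:expansion}. Where you differ is in how the feature-by-feature check is organized. The paper uses the precomposition description directly: $G_{\expand,T}\Set_{E,\expl}=\Set_{E,\expl}\circ\varphi_\expand$ holds on the nose, so the value of $G_{\expand,T}\Set_{E,\expl}$ at each point of the decorated sketch (type, pure term, propagator, catcher, strong and weak equation) is \emph{literally} the corresponding set of explicit data, and comparing with Definition~\ref{defi:deco-set} yields the stated equality in one step. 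You instead route the identification through the bijection of Proposition~\ref{prop:deco-morphism} applied to elementary specifications, i.e.\ through the representables $\Yon(p)$, plus naturality; this is valid and computes the same components (by Yoneda, the model sets $\Mod_{\DiaL_\deco}(\Yon(p),G_{\expand,T}\Set_{E,\expl})$ are exactly the components of the realization), but an adjunction-plus-Yoneda argument only ever produces natural \emph{bijections}, which is precisely why you are left with the equality-versus-isomorphism worry in your last paragraph. Your proposed patch for that worry is not the right tool: the counit $\eps$ being invertible says $\DiaL\DiaR\Tt\cong\Tt$ and cannot convert an isomorphism of theories into an equality. The clean fix is the fact you already stated at the outset: since $G_{\expand,T}$ is precomposition, its components are equal --- not merely bijective --- to the components of $\Set_{E,\expl}$ at the image points, which are the data of Definition~\ref{defi:deco-set}. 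In short, the detour through Proposition~\ref{prop:deco-morphism} buys nothing here and creates the only delicate point of your argument; dropping it, your first paragraph essentially \emph{is} the paper's proof.
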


\begin{proof} 
The morphism of limit sketches $\varphi_{\expand}$, 
corresponding to the locally presentable functor $F_{\expand,T}$, 
is deduced from Figure~\ref{fig:expansion}. 
By definition of $G_{\expand,T}$  
we have $G_{\expand,T}\Set_{E,\expl} = \Set_{E,\expl} \circ \varphi_{\expand}$. 
The lemma follows by checking that  
the definition of $\Set_{E,\deco}$ (Definition~\ref{defi:deco-set}) 
is precisely the description of $\Set_{E,\expl} \circ \varphi_{\expand}$.
\end{proof}

Our main result is the next theorem, which states that 
the decorated point of view provides exactly the semantics of exceptions 
defined as a model of the explicit specification for exceptions 
in Definition~\ref{defi:expl-model}.
Thus the decorated point of view {\em is} an alternative to
the explicit point of view, as it provides the intended semantics,
but it is also closer to the syntax since the type of exceptions is no
longer explicit.

To prove this, the key point is the existence of the expansion morphism 
from the decorated to the explicit logic. 
Within the category of diagrammatic logics, the proof is simple:
it uses the fact that the expansion morphism, like every morphism in
this category, is a left adjoint functor.

\begin{theorem}
\label{thm:deco-model} 
The model $M_\deco$ of the specification $\Ss_\deco$  
with values in the theory $\Set_{E,\deco}$ in the decorated logic 
provides the intended semantics of exceptions. 
\end{theorem}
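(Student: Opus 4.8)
The plan is to realize $M_\deco$ as the exact counterpart of the intended semantics $M_\expl$ under the natural bijection of models supplied by Proposition~\ref{prop:deco-morphism}, applied to the expansion morphism. By Definition~\ref{defi:expl-model} the intended semantics is the model $M_\expl$ of $\Ss_\expl$ with values in $\Set_{E,\expl}$, i.e. an element of $\Mod_{\DiaL_\expl}(\Ss_\expl,\Set_{E,\expl})$. So it suffices to produce a bijection between $\Mod_{\DiaL_\deco}(\Ss_\deco,\Set_{E,\deco})$ and $\Mod_{\DiaL_\expl}(\Ss_\expl,\Set_{E,\expl})$ and to check that $M_\deco$ is the decorated model matching $M_\expl$ through it.

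First I would invoke Proposition~\ref{prop:deco-expand}, which guarantees that the expansion $F_\expand=(F_{\expand,S},F_{\expand,T})$ is a morphism of diagrammatic logics; hence $F_{\expand,T}$ admits a right adjoint $G_{\expand,T}$ and Proposition~\ref{prop:deco-morphism} applies. Taking $\Ss=\Ss_\deco$ and $\Tt'=\Set_{E,\expl}$ there yields a bijection, natural in both arguments,
$$ \Mod_{\DiaL_\deco}(\Ss_\deco,G_{\expand,T}\Set_{E,\expl}) \;\cong\; \Mod_{\DiaL_\expl}(F_{\expand,S}\Ss_\deco,\Set_{E,\expl}) \;. $$
Then I would rewrite each side using the two translation lemmas already at hand: Lemma~\ref{lem:Tt} identifies the coefficient theory $G_{\expand,T}\Set_{E,\expl}$ on the left with $\Set_{E,\deco}$, while Lemma~\ref{lem:Ss} identifies the specification $F_{\expand,S}\Ss_\deco$ on the right with $\Ss_\expl$. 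Substituting these two equalities turns the displayed bijection into $\Mod_{\DiaL_\deco}(\Ss_\deco,\Set_{E,\deco}) \cong \Mod_{\DiaL_\expl}(\Ss_\expl,\Set_{E,\expl})$, and carrying $M_\expl$ back along it produces the decorated model $M_\deco$, which therefore provides the intended semantics.

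The genuine work is front-loaded, so I expect no real obstacle at the level of the theorem itself: the substance lies in Proposition~\ref{prop:deco-expand} (verifying that every decorated rule of Figure~\ref{fig:rules} is sent to an explicit proof, so that $F_\expand$ is indeed a morphism of logics) and in Lemmas~\ref{lem:Ss} and~\ref{lem:Tt}, all of which are already established. The single point that still demands care is \emph{coherence}: because the bijection of Proposition~\ref{prop:deco-morphism} is natural and $G_{\expand,T}$ acts by precomposition with the limit-sketch morphism (as used in the proof of Lemma~\ref{lem:Tt}), the decorated model associated to $M_\expl$ interprets pure terms, propagators and catchers as functions $X\to Y$, $X\to Y+E$ and $X+E\to Y+E$ exactly as prescribed by Definition~\ref{defi:deco-set}, rather than as an ad hoc choice. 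This is precisely the concrete $M_\deco$, and hence it delivers the intended semantics of exceptions.
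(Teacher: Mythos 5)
Your proposal is correct and takes essentially the same route as the paper's own proof: both invoke Proposition~\ref{prop:deco-morphism} for the expansion morphism, use Lemmas~\ref{lem:Ss} and~\ref{lem:Tt} to rewrite the two sides of the bijection as $\Mod_{\DiaL_\deco}(\Ss_\deco,\Set_{E,\deco}) \cong \Mod_{\DiaL_\expl}(\Ss_\expl,\Set_{E,\expl})$, and conclude by matching $M_\deco$ with $M_\expl$ across it. Your closing coherence check merely spells out what the paper compresses into ``it is easy to check that $M_\deco$ corresponds to $M_\expl$ in this bijection.''
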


\begin{proof} 
According to Definition~\ref{defi:expl-model}, 
the intended semantics of exceptions 
is the model $M_\expl$ of $\Ss_\expl$  
with values in $\Set_{E,\expl}$ in the explicit logic. 
In addition, $M_\deco$ is a model of $\Ss_\deco$  
with values in $\Set_{E,\deco}$ in the decorated logic. 
Furthermore, we know from Lemmas~\ref{lem:Ss} and~\ref{lem:Tt}
that $\Ss_\expl = F_{\expand}\Ss_\deco$ 
and $\Set_{E,\deco} = G_{\expand}\Set_{E,\expl}$,
where $G_{\expand}$ is right adjoint to $F_{\expand}$.    
Thus, it follows from proposition~\ref{prop:deco-morphism} that 
there is a bijection between 
$ \Mod_{\DiaL_\expl}(\Ss_\expl,\Set_{E,\expl}) $ and 
$ \Mod_{\DiaL_\deco}(\Ss_\deco,\Set_{E,\deco}) $. 
Finally, it is easy to check that $M_\deco$ corresponds to $M_\expl$ 
in this bijection. 
\end{proof} 

\subsection{The decorated syntax provides the syntax of exceptions}
\label{subsec:deco-app}

The signature $\Sig_\exc$ from Definition~\ref{defi:expl-sig} 
can easily be recovered from the decorated specification $\Ss_\deco$ 
by dropping the decorations and forgetting the equations. 
More formally, this can be stated as follows. 
Let us introduce a third logic $\DiaL_\app$, 
called the \emph{apparent} logic, by dropping all the decorations 
from the decorated logic; 
thus, the apparent logic is essentially the monadic equational logic 
with an empty type. 
The fact of dropping the decorations is a morphism of logics 
$F_\undeco:\DiaL_\deco\to\DiaL_\app$.
Therefore, we can form the apparent specification
$\Ss_\app=F_\undeco\Ss_\deco$ which contains the signature for
exceptions $\Sig_\exc$. Note that, as already mentioned in
Remark~\ref{rem:expl-model}, 
the intended semantics of exceptions cannot be seen as a set-valued
model of $\Ss_\app$.

\subsection{Some decorated proofs for exceptions}
\label{subsec:deco-proof}

According to Theorem~\ref{thm:deco-model},  
the intended semantics of exceptions can be expressed as a model 
in the decorated logic. 
Now we show that the decorated logic can also be used 
for proving properties of exceptions in a concise way. 
Indeed, as for proofs on states in \cite{DDFR12b-state}, 
we may consider two kinds of proofs on exceptions: 
the \emph{explicit} proofs involve a type of exceptions,
while the \emph{decorated} proofs do not mention any type of exceptions 
but require the specification to be decorated,
in the sense of Section~\ref{sec:deco}.
In addition, the expansion morphism, from the decorated logic 
to the explicit logic, maps each decorated proof to an explicit one. 
In this Section we give some decorated proofs for exceptions, 
using the inference rules of Section~\ref{subsec:deco-deco}. 

We know from \cite{DDFR12a-short} that the properties of 
the core tagging and untagging operations for exceptions 
are dual to the properties of 
the lookup and update operations for states.
Thus, we may reuse the decorated proofs involving states 
from \cite{DDFR12b-state}. 
Starting from any one of the seven equations for states
in \cite{PP02} we can dualize this equation 
and derive a property about raising and handling 
exceptions. This is done here for 
the \emph{annihilation catch-raise} and for 
the \emph{commutation catch-catch} properties. 


On states, the \emph{annihilation lookup-update} property  
means that updating any location with the content of this location
does not modify the state. 
A decorated proof of this property is given in \cite{DDFR12b-state}.
By duality we get the following \emph{annihilation untag-tag} property
(Lemma~\ref{lem:ci-ti}), 
which means that tagging just after untagging, 
both with respect to the same index, returns the given exception. 
Then this result is used in Proposition~\ref{prop:hi-ri} 
for proving the \emph{annihilation catch-raise} property:
catching an exception and re-raising it is like doing nothing. 

\begin{lemma}[Annihilation untag-tag]
\label{lem:ci-ti} 
For each $i\in I$:
$$ 
 t_i^\ppg \circ c_i^\ctc \eqs \id_\empt^\pure \;.
$$ 
\end{lemma}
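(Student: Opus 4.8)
The plan is to derive the strong equation $t_i \circ c_i \eqs \id_\empt$ by reducing it to a family of weak equations, exploiting that the taggings $(t_k^\ppg : P_k \to \empt)_k$ form a constitutive coproduct (rules~(f) of Figure~\ref{fig:rules}), which after expansion records exactly that $E=\sum_k P_k$ in the model. The uniqueness half of~(f) says that a catcher out of $\empt$ is determined, up to strong equivalence, by its weak behaviour after precomposition with each $t_k$. Since $t_i \circ c_i$ and $\id_\empt$ are both catchers $\empt \to \empt$, it therefore suffices to show that each satisfies $(-)\circ t_k \eqw t_k$ for every $k\in I$: each is then strongly equal to the cotuple $\cotuple{t_j}_j$, hence to the other by symmetry and transitivity of $\eqs$ (rules~(a)). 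For $\id_\empt$ this is immediate, as $\id_\empt \circ t_k \eqs t_k$ by the identity law.

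First I would expand the composite by associativity (a strong equation, hence also weak), $(t_i \circ c_i)\circ t_k \eqs t_i \circ (c_i \circ t_k)$, and rewrite the inner factor $c_i \circ t_k$ using the defining weak equations of $\Ss_\deco$. When $k=i$, the equation $c_i \circ t_i \eqw \id$ gives, after composing on the left by $t_i$ through the unrestricted replacement rule of~(b), $t_i \circ (c_i \circ t_i) \eqw t_i \circ \id \eqs t_i$. When $k=j\ne i$, the equation $c_i \circ t_j \eqw \cotu_{P_i} \circ t_j$ gives, again by replacement, $t_i \circ (c_i \circ t_j) \eqw t_i \circ (\cotu_{P_i} \circ t_j)$; reassociating and applying Lemma~\ref{lem:coprod-cotu} to the propagator $t_i$ yields $t_i \circ \cotu_{P_i} \eqs \cotu_\empt$, so that $t_i \circ (\cotu_{P_i} \circ t_j) \eqs \cotu_\empt \circ t_j$. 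It remains to identify $\cotu_\empt \circ t_j$ with $t_j$: since $\id_\empt \eqw \cotu_\empt$ by the initiality rule~(d) and both sides are propagators, the rule of~(b) upgrading weak equations between propagators to strong ones gives $\cotu_\empt \eqs \id_\empt$, whence $\cotu_\empt \circ t_j \eqs \id_\empt \circ t_j \eqs t_j$. Chaining these steps (all strong except the single weak rewrite) produces $(t_i \circ c_i)\circ t_k \eqw t_k$ in both cases.

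The delicate point, and the reason the constitutive coproduct is indispensable, is the passage from weak to strong information: the two defining equations only constrain $c_i$ on the ordinary values lying in the images of the taggings, so a strong equality on all of $\empt$ (that is, on all of $E$ after expansion) cannot follow without the joint-epi character supplied by~(f). A secondary care is to invoke only the replacement rule of~(b) --- valid for an arbitrary outer term --- rather than the substitution rule, which is restricted to pure terms, since the outer factor $t_i$ is merely a propagator; the strong congruence of~(a) may be used freely when reassociating. I expect no further obstacle beyond bookkeeping the decorations and the mixing of $\eqs$ and $\eqw$ via the conversion of every strong equation into a weak one together with the transitivity of $\eqw$.
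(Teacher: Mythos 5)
Your proof is correct, but it is not literally the route the paper takes: the paper gives no written derivation of Lemma~\ref{lem:ci-ti} at all, obtaining it instead ``by duality'' from the decorated proof of the annihilation lookup-update property for states in the cited companion paper. What you have written is, in effect, exactly the dualization that the paper leaves implicit: your joint-epi principle supplied by the uniqueness half of rules~(f) is the dual of the observational product used for states; your case split $k=i$ versus $k\ne i$ on the two defining weak equations of $\Ss_\deco$ mirrors the two lookup-update axioms; and your use of Lemma~\ref{lem:coprod-cotu} together with the upgrade $\cotu_\empt \eqs \id_\empt$ (a weak equation between propagators) dualizes the corresponding facts about the final type. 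Your handling of decorations is also sound: you compose the weak axioms with $t_i$ on the left only via the unrestricted replacement rule of~(b), avoiding the pure-only substitution rule, and you pass from weak to strong only between propagators or through rules~(f). The one point to flag is that Definition~\ref{defi:deco-spec} never explicitly declares the family $(t_k^\ppg:P_k\to\empt)_{k\in I}$ to be a constitutive coproduct, so your appeal to rules~(f) rests on an assumption the paper leaves tacit; it is, however, forced --- the two weak equations constrain $c_i$ only on the images of the taggings, so without the joint-epi property the lemma is simply not derivable --- and it is clearly the intended reading, being the decorated counterpart of $E=\sum_k P_k$ and the exact dual of the observational product declared for states. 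As for what each route buys: the paper's duality argument is economical and advertises the states/exceptions symmetry, whereas your direct proof is self-contained in the rules of Figure~\ref{fig:rules} and verifies concretely that the dualized argument really does go through in the exception logic.
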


\begin{proposition}[Annihilation catch-raise]
\label{prop:hi-ri} 
For each propagator $f^\ppg:X\to Y$ and each $i\in I$:  
$$ 
  \try{f}{\catch{i}{\throw{i}{Y}}} \eqs f   \;.
$$
\end{proposition}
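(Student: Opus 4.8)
We need to show the *annihilation catch-raise* property: for each propagator $f^\ppg:X\to Y$ and each index $i\in I$,
$$\try{f}{\catch{i}{\throw{i}{Y}}} \eqs f.$$

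Let me recall the definitions. The handling function $\try{f}{\catch{i}{g}}$ is defined as $\toppg{\TRY{f}{\catch{i}{g}}}$, where $\TRY{f}{k} = \cotuple{\id_Y | k} \circ f$ and for the single-index case $k = \catch{i}{g}$ is the catcher $\cotuple{g | \cotu_Y} \circ c_i$ (using $k_2 = \cotu_Y$ and $k_1 = \cotuple{g|k_2}\circ c_i$, with $\cotuple{g|k_2} \eqs g$ by Lemma 1.26).

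Here $g = \throw{i}{Y} = \cotu_Y \circ t_i$ (the raising propagator).

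So I need to show that wrapping up: catching the exception of index $i$ (via $c_i$) and then re-throwing it (via $\throw{i}{Y} = \cotu_Y \circ t_i$) gives back $f$.

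**The key computation.**

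The equation from the remark tells us that for $n=1$,
$$\try{f}{\catch{i}{g}} \eqs \toppg{\left(\cotuple{\id_Y | g \circ c_i}\circ f\right)}.$$

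With $g = \throw{i}{Y} = \cotu_Y \circ t_i$, the inner catcher of the catch-clause is $g \circ c_i = \cotu_Y \circ t_i \circ c_i$.

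Now here is where Lemma 1.30 (*annihilation untag-tag*) enters: it states $t_i \circ c_i \eqs \id_\empt$. So $g \circ c_i = \cotu_Y \circ t_i \circ c_i \eqs \cotu_Y \circ \id_\empt \eqs \cotu_Y$.

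Therefore the catch-clause becomes $\cotuple{\id_Y | \cotu_Y}\circ f$. By Lemma 1.26 (with the identity as the propagator, since $\id_Y^\ppg$ holds), $\cotuple{\id_Y | \cotu_Y} \eqs \id_Y$. Hence $\cotuple{\id_Y|\cotu_Y}\circ f \eqs \id_Y \circ f \eqs f$. Finally, taking the propagator $\toppg{(\cdot)}$ of both sides and using that $f$ is already a propagator so $\toppg{f} \eqs f$ (via rule (c): $\toppg{k}\eqw k$, promoted to a strong equation for propagators), we conclude $\try{f}{\catch{i}{\throw{i}{Y}}} \eqs f$.

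**Plan and main obstacle.**

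The plan is: first expand the definition of $\try{f}{\catch{i}{\throw{i}{Y}}}$ using the $n=1$ formula, substituting $g = \throw{i}{Y} = \cotu_Y\circ t_i$. Second, apply Lemma~\ref{lem:ci-ti} to simplify $t_i \circ c_i$ to $\id_\empt$, giving the catch-clause $g\circ c_i \eqs \cotu_Y$. Third, apply Lemma~\ref{lem:coprod-cotu} to collapse $\cotuple{\id_Y|\cotu_Y}\eqs\id_Y$ and simplify the whole catcher to $f$. Fourth, handle the outer $\toppg{(\cdot)}$ and the propagation wrapper carefully. The main obstacle will be the last step: one must be careful about the distinction between weak and strong equations when passing through $\toppg{(\cdot)}$. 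The composite inside $\toppg{(\cdot)}$ is only known to equal $f$ strongly as a catcher, but $\toppg{H}$ only agrees with $H$ weakly (rule (c): $\toppg{H}\eqw H$); one must use that both $\toppg{H}$ and $f$ are propagators and that a weak equation between propagators is a strong one (rule (b)) to upgrade the final $\eqw$ to the claimed $\eqs$. Tracking which intermediate equations are strong versus weak, and justifying each substitution into a composite with the correct congruence rule (substitution on the left requires a pure term, so one substitutes on the appropriate side), is the delicate bookkeeping that the formal proof must make explicit.
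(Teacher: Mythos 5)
Your proof is correct and follows essentially the same route as the paper's: expand via Equation~(\ref{eq:deco-handle-one}), apply Lemma~\ref{lem:ci-ti} to reduce $\cotu_Y \circ t_i \circ c_i$ to $\cotu_Y$, collapse $\cotuple{\id_Y|\cotu_Y}\eqs\id_Y$ (you cite Lemma~\ref{lem:coprod-cotu}, the paper invokes the unicity rule~(e) directly, which is the same fact), and finally discharge the $\toppg{\cdot}$ wrapper by combining rule~(c) with the weak-to-strong upgrade for propagators in rule~(b). Your closing remark about tracking weak versus strong equations through $\toppg{\cdot}$ is precisely the delicate point the paper's proof also handles, so nothing is missing.
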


\begin{proof} 
By Equation~(\ref{eq:deco-handle-one}) and 
Definition~\ref{defi:deco-spec} we have 
$ \try{f}{\catch{i}{\throw{i}{Y}}} \eqs  
   \toppg{ (\cotuple{ \id_Y | \cotu_Y \circ t_i \circ c_i } \circ f ) } $.
By Lemma~\ref{lem:ci-ti} 
$ \cotuple{ \id_Y | \cotu_Y \circ t_i \circ c_i } \eqs  
\cotuple{ \id_Y | \cotu_Y } $,
and the unicity property of $\cotuple{ \id_Y | \cotu_Y }$
implies that $\cotuple{ \id_Y | \cotu_Y } \eqs \id_Y $. 
Thus $ \try{f}{\catch{i}{\throw{i}{Y}}} \eqs \toppg{f}$.
In addition, since $\toppg{f} \eqw f$ and $f$ is a propagator 
we get $\toppg{f}\eqs f$. 
Finally, the transitivity of $\eqs$ yields the proposition. 
\end{proof} 


On states, the \emph{commutation update-update} property  
means that updating two different locations can be done in any order. 
By duality we get the following \emph{commutation untag-untag} property, 
(Lemma~\ref{lem:cj-ci}) 
which means that untagging with respect to two distinct 
exceptional types can be done in any order. 
A detailed decorated proof of the 
commutation update-update property
is given in \cite{DDFR12b-state}.
The statement of this property and its proof 
use \emph{semi-pure products}, 
which were introduced in 
\cite{DDR11-seqprod} in order to provide a decorated alternative 
to the strength of a monad. 
Dually, for the commutation untag-untag property 
we use \emph{semi-pure coproducts},
thus generalizing the rules for the coproduct $X+\empt$.

The \emph{coproduct} of two types $A$ and $B$ 
is defined as a type $A+B$ with two pure coprojections
$\copi_1^\pure:A \to A+B$ and $\copi_2^\pure:B \to A+B$, 
which satisfy the usual categorical coproduct property 
with respect to the pure morphisms. 
Then the \emph{semi-pure coproduct} of 
a propagator $f^\ppg:A\to C$
and a catcher $k^\ctc:B\to C$ is a catcher 
$\cotuple{f|k}^{\ctc}:A+B\to C$ 
which is characterized, up to strong equations, by 
the following decorated version of the coproduct property: 
$\cotuple{f|k} \circ \copi_1 \eqw f $
and $\cotuple{f|k} \circ \copi_2 \eqs k $.
Then as usual, the coproduct 
$f'+k':A+B\to C+D$ of 
a propagator $f':A\to C$
and a catcher $k':B\to D$ is the catcher 
$f'+k' = \cotuple{\copi_1\circ f\;|\; \copi_2\circ k}:A+B\to C+D$.
 
Whenever $f$ and $g$ are propagators it can be proved that 
$\toppg{\cotuple{f|g}} \eqs \cotuple{f|g}$;  
thus, up to strong equations, 
we can assume that in this case 
$\cotuple{f\;|\;g}:A+B \to C$ is a propagator; 
it is characterized, up to strong equations, by
$\cotuple{f\;|\;g} \circ \copi_1 \eqs f$ and 
$\cotuple{f\;|\;g} \circ \copi_2 \eqs g$.

\begin{lemma}[Commutation untag-untag] 
\label{lem:cj-ci} 
For each $i,j\in I$ with $i\ne j$: 
$$ 
  (c_i+ \id_{P_j})^\ctc \circ c_j^\ctc \eqs 
    (\id_{P_i} + c_j)^\ctc \circ c_i^\ctc  : \empt \to P_i+P_j
$$
\end{lemma}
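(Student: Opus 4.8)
The plan is to prove the strong equation by exploiting the constitutive coproduct structure carried by the initial type $\empt$: in $\Ss_\deco$ the tagging propagators $(t_k^\ppg:P_k\to\empt)_{k\in I}$ form a constitutive coproduct (this is exactly dual to the lookups forming the observational product of states in \cite{DDFR12b-state}, since in the expansion $E=\sum_k P_k$ with the $t_k$ as coprojections). Hence the uniqueness clause of the rules for a constitutive coproduct (part (f) of Figure~\ref{fig:rules}) applies: two catchers $u,v:\empt\to P_i+P_j$ are strongly equal as soon as $u\circ t_k\eqw v\circ t_k$ for every $k\in I$. So I would set $u=(c_i+\id_{P_j})\circ c_j$ and $v=(\id_{P_i}+c_j)\circ c_i$ and reduce the statement $u\eqs v$ to checking, for each index $k$, the single weak equation $u\circ t_k\eqw v\circ t_k$.

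The verification then splits into three cases according to whether $k=i$, $k=j$, or $k\notin\{i,j\}$. In each case I would rewrite the inner $c_i\circ t_k$ and $c_j\circ t_k$ using the defining weak equations of $\Ss_\deco$ (Definition~\ref{defi:deco-spec}), namely $c_k\circ t_k\eqw\id$ and $c_k\circ t_\ell\eqw\cotu\circ t_\ell$ for $\ell\ne k$, and then simplify the outer coproduct catcher by its semi-pure coproduct characterization: on the pure summand $\cotuple{f|k'}\circ\copi_1\eqw f$ (weak) and on the catcher summand $\cotuple{f|k'}\circ\copi_2\eqs k'$ (strong). For $k=i$ both composites reduce weakly to the target coprojection $\copi_1:P_i\to P_i+P_j$; for $k=j$ both reduce weakly to $\copi_2:P_j\to P_i+P_j$; for $k\notin\{i,j\}$ the inner untaggings both propagate (via $c\circ t_k\eqw\cotu\circ t_k$) and, after collapsing the resulting $\cotu$ terms with Lemma~\ref{lem:coprod-cotu} (which gives $\copi_1\circ\cotu\eqs\cotu$), both composites reduce weakly to the same propagated term $\cotu_{P_i+P_j}\circ t_k$. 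Thus $u\circ t_k\eqw v\circ t_k$ in all three cases, and the uniqueness clause yields $u\eqs v$.

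The main obstacle is decoration bookkeeping rather than any deep computation. Weak equations do not form a congruence: the substitution rule fires only when the substituted term is pure, so every rewriting of an inner $c_k\circ t_k$ must be performed by \emph{post}composition (the replacement rule of part (b) of Figure~\ref{fig:rules}), never by substituting under a non-pure term. One must also keep careful track of the canonical identifications $\empt+P_j\cong P_j$ and $P_i+\empt\cong P_i$ when unfolding $(c_i+\id_{P_j})$ and $(\id_{P_i}+c_j)$ as semi-pure coproducts, so that the coprojections used in the characterizations are correctly matched with those of the target $P_i+P_j$. A shorter, entirely equivalent route is to invoke the duality of Remark~\ref{rem:dual}: the statement is the exact dual of the commutation update-update property for states, whose detailed decorated proof is given in \cite{DDFR12b-state}, and transporting that proof along the duality produces the argument sketched above.
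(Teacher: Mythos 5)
Your proposal is correct and takes essentially the same route as the paper: the paper gives no explicit derivation for this lemma but justifies it as the dual of the commutation update-update property for states, and your argument via the uniqueness clause of the constitutive-coproduct rules (f), with the case analysis over $k=i$, $k=j$, $k\notin\{i,j\}$ and the careful use of replacement-only rewriting for weak equations, is exactly that states proof transported along the duality of Remark~\ref{rem:dual}. The only point worth noting is that the constitutive-coproduct structure on the taggers $(t_k^\ppg:P_k\to\empt)_{k\in I}$, which both the lemma and your proof require (without it the statement admits countermodels where the untaggers act arbitrarily on ``unreachable'' exceptions), is left implicit in Definition~\ref{defi:deco-spec}; you correctly identify and make explicit this assumption.
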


\begin{proposition}[Commutation catch-catch] 
\label{prop:hj-hi}
For each $i,j\!\in\! I$ with $i\!\ne\! j$: 
$$ 
\try{f}{\catchij{i}{g}{j}{h}}\eqs \try{f}{\catchij{j}{h}{i}{g}} 
$$
\end{proposition}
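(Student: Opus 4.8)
The plan is to reduce the claimed strong equation between the two handling expressions to the commutation of the two untagging catchers, i.e.\ to Lemma~\ref{lem:cj-ci}, by unfolding the handling construction of Definition~\ref{defi:deco-spec}. First I would use Equation~\eqref{eq:deco-handle-two} to write each side as the propagation of a catcher,
$$
\try{f}{\catchij{i}{g}{j}{h}} \eqs \toppg{(\cotuple{\id_Y\,|\,k_1}\circ f)}
\quad\text{and}\quad
\try{f}{\catchij{j}{h}{i}{g}} \eqs \toppg{(\cotuple{\id_Y\,|\,k_2}\circ f)},
$$
where $k_1=\cotuple{g\,|\,h\circ c_j}\circ c_i$ and $k_2=\cotuple{h\,|\,g\circ c_i}\circ c_j$ are the catchers $\empt\to Y$ produced by the nested case distinctions. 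This reduces the whole statement to the single strong equation $k_1\eqs k_2$ between catchers.

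Next I would lift $k_1\eqs k_2$ back up to the handling expressions. Assuming $k_1\eqs k_2$, the uniqueness part of the rules~(e) for case distinction with respect to $X+\empt$ gives $\cotuple{\id_Y\,|\,k_1}\eqs\cotuple{\id_Y\,|\,k_2}$, and the substitution rule in~(a) then yields $H_1\eqs H_2$ for the catchers $H_p=\cotuple{\id_Y\,|\,k_p}\circ f$. By the rules~(c) for the propagation of exceptions, $\toppg{H_1}\eqw H_1$ and $\toppg{H_2}\eqw H_2$; since $H_1\eqs H_2$ entails $H_1\eqw H_2$ by~(b), transitivity of $\eqw$ gives $\toppg{H_1}\eqw\toppg{H_2}$. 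As both $\toppg{H_1}$ and $\toppg{H_2}$ are propagators, the rule $\frac{f^\ppg\eqw g^\ppg}{f\eqs g}$ of~(b) upgrades this weak equation to $\toppg{H_1}\eqs\toppg{H_2}$, which is exactly the desired equation.

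It remains to prove $k_1\eqs k_2$, and here Lemma~\ref{lem:cj-ci} is the crux. The idea is to postcompose the untag-untag equation $(\id_{P_i}+c_j)\circ c_i\eqs(c_i+\id_{P_j})\circ c_j:\empt\to P_i+P_j$ with the propagator coproduct $\cotuple{g\,|\,h}:P_i+P_j\to Y$ (which is a propagator because $g$ and $h$ are, and hence satisfies $\cotuple{g\,|\,h}\circ \copi_1\eqs g$ and $\cotuple{g\,|\,h}\circ \copi_2\eqs h$), and to identify the two resulting composites with $k_1$ and $k_2$ by means of the coproduct fusion identities $\cotuple{g\,|\,h}\circ(\id_{P_i}+c_j)\eqs\cotuple{g\,|\,h\circ c_j}$ and $\cotuple{g\,|\,h}\circ(c_i+\id_{P_j})\eqs\cotuple{h\,|\,g\circ c_i}$. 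Each fusion identity is obtained by checking the characterizing conditions of the relevant semi-pure coproduct (weak on the propagator leg, strong on the catcher leg) and invoking its uniqueness. The main obstacle is precisely this decoration bookkeeping: the semi-pure coproducts $\id_{P_i}+c_j$ and $c_i+\id_{P_j}$ are characterized only up to strong equality and only weakly constrained on their propagator leg, so one must track carefully which intermediate composites are weak and which are strong in order to produce the fusion identities as genuine strong equations — which is what postcomposing the strong equation of Lemma~\ref{lem:cj-ci} demands.
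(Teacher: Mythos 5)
Your proposal is correct and follows essentially the same route as the paper's proof: both reduce the statement via Equation~(\ref{eq:deco-handle-two}) to the strong equation $\cotuple{g\,|\,h\circ c_j}\circ c_i \eqs \cotuple{h\,|\,g\circ c_i}\circ c_j$, establish it through the fusion identities $\cotuple{g\,|\,h\circ c_j}\eqs\cotuple{g\,|\,h}\circ(\id_{P_i}+c_j)$ and $\cotuple{h\,|\,g\circ c_i}\eqs\cotuple{g\,|\,h}\circ(c_i+\id_{P_j})$, and conclude by Lemma~\ref{lem:cj-ci}. Your explicit derivation of the lifting step (uniqueness of the case distinction, then passing through $\blacktriangledown$ via weak equations and upgrading to a strong equation between propagators) merely spells out what the paper leaves implicit in its phrase ``the result will follow from''.
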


\begin{proof} 
According to Equation~(\ref{eq:deco-handle-two}): 
$  \try{f}{\catchij{i}{g}{j}{h}} \eqs 
  \toppg{( 
    \cotuple{\id \;|\; 
      \cotuple{g \;|\; 
        h \circ c_j} 
      \circ c_i} 
    \circ f )} $.
Thus, the result will follow from
$ \cotuple{g \;|\; h \circ c_j} \circ c_i \eqs
  \cotuple{h \;|\;  g \circ c_i}  \circ c_j $. 
It is easy to check that 
 $ \cotuple{g \;|\; h \circ c_j}
  \eqs \cotuple{g \;|\; h} \circ (\id_{P_i} + c_j) $, 
so that
$ \cotuple{g \;|\; h \circ c_j} \circ c_i \eqs 
   \cotuple{g \;|\; h} \circ (\id_{P_i} + c_j) \circ c_i \;.$
Similarly
$ \cotuple{h \;|\; g \circ c_i} \circ c_j \eqs 
   \cotuple{h \;|\; g} \circ (\id_{P_j} + c_i) \circ c_j $
hence 
$ \cotuple{h \;|\; g \circ c_i} \circ c_j \eqs 
   \cotuple{g \;|\; h} \circ (c_i + \id_{P_j}) \circ c_j \;.$ 
Then the result follows from Lemma~\ref{lem:cj-ci}.
\end{proof} 

\subsection{About higher-order constructions} 
\label{subsec:deco-lambda}

We know from Section~\ref{subsec:expl-lambda} that 
we can add higher-order features in our explicit logic.  
This remark holds for the decorated logic as well.  
Let us introduce a functional type $Z^{W\dec}$ 
for each types $W$ and $Z$ and each decoration $\dec$ for terms.
The expansion of $Z^{W\pure}$ is $Z^W$,
the expansion of $Z^{W\ppg}$ is $(Z+E)^W$ 
and the expansion of $Z^{W\ctc}$ is $(Z+E)^{(W+E)}$. 
Then each $\varphi^\dec:W\to Z$ gives rise to 
$\lambda x.\varphi:\unit \to Z^{W\dec}$,
and a major point is that 
$\lambda x.\varphi$ is pure for every decoration $\dec$ of $\varphi$.
Informally, we can say that 
the abstraction moves the decoration from the term to the type. 
This means that the expansion of $(\lambda x.\varphi)^\pure$ 
is $\lambda x.\varphi : \unit \to F_\expand(Z^{W\dec})$,
as required: for instance 
when $\varphi^\ppg$ is a propagator the expansion of 
$(\lambda x.\varphi)^\pure$ is 
$\lambda x.\varphi : \unit \to (Z+E)^W$,
as in Section~\ref{subsec:expl-lambda}. 
Besides, it is easy to prove in the decorated logic that 
whenever $f$ is pure we get $\try{f}{\catchn{i_1}{g_1}{i_n}{g_n}} \equiv f$. 
It follows that this occurs when $f$ is a lambda abstraction: 
$\try{\lambda x.\varphi}{\catchn{i_1}{g_1}{i_n}{g_n}} \equiv \lambda
x.\varphi$, as expected in functional languages.

\section{Conclusion and future work}
\label{sec:conc}

We have presented three logics for dealing with exceptions: 
the \emph{apparent} logic $\DiaL_\app$ (Section~\ref{subsec:deco-app}) 
for dealing with the syntax, 
the \emph{explicit} logic $\DiaL_\expl$ (Section~\ref{subsec:expl-expl})
for providing the semantics of exceptions as a model  
in a transparent way, 
and the \emph{decorated} logic $\DiaL_\deco$ (Section~\ref{subsec:deco-deco}) 
for reconciling syntax and semantics. 
These logics are related by morphisms of logics   
$F_\undeco:\DiaL_\deco\to\DiaL_\app$ and $F_\expand:\DiaL_\deco\to\DiaL_\expl$. 
A similar approach can be used for other exceptions 
\cite{DD10-dialog,DDFR12b-state}. 

Future work include the following topics. 
\begin{itemize}
\item Dealing with $n$-ary operations involving exceptions. 
We can add a cartesian structure to our decorated logic
thanks to the notion of \emph{sequential product} from \cite{DDR11-seqprod}.
This notion is based on the \emph{semi-pure products},
which are dual to the semi-pure coproducts 
used in Section~\ref{subsec:deco-proof}. 
\item Adding higher-order features. 
This has been outlined in Sections~\ref{subsec:expl-lambda} 
and~\ref{subsec:deco-lambda}, 
however a more precise comparison with \cite{RT99} remains to be done. 
\item Deriving a decorated operational semantics for exceptions  
by directing the weak and strong equations. 
\item Using a proof assistant for decorated proofs.
Thanks to the morphism $F_\undeco : \DiaL_\deco \to \DiaL_\app$,  
checking a decorated proof can be split in two parts: 
first checking the undecorated proof in the apparent logic,  
second checking that the decorations can be added.
This separation simplifies the definition of the formalization
  towards a proof assistant: first formalize the syntactic rules of
  the language, second add computational effects.
\item Combining computational effects. 
Since an effect is based on a span of logics, 
the combination of effects might be based on the composition of spans. 
\end{itemize} 

\noindent \textbf{Acknowledgment.} 
We are indebted to Olivier Laurent for pointing out 
the extension of our approach to functional languages. 


\appendix

\section{Handling exceptions in Java}
\label{app:java}

Definition~\ref{defi:expl-model} relies on the following description 
of the handling of exceptions in Java \cite[Ch.~14]{java}.  
\begin{itemize}
\item[] A try statement without a finally block is executed by 
first executing the try block. 
Then there is a choice:
\begin{enumerate}
\item 
\label{tc-one}
If execution of the try block completes normally, 
then no further action is taken 
and the try statement completes normally.
\item 
\label{tc-two}
If execution of the try block completes abruptly 
because of a throw of a value $V$, 
then there is a choice:
  \begin{enumerate}
  \item 
  If the run-time type of $V$ is assignable to the parameter 
  of any catch clause of the try statement, 
  then the first (leftmost) such catch clause is selected. 
  The value $V$ is assigned to the parameter of the selected 
  catch clause, and the block of that catch clause is executed. 
    \begin{enumerate}
    \item
    \label{tc-i}
    If that block completes normally, 
    then the try statement completes normally; 
    \item 
    \label{tc-ii}
    if that block completes abruptly for any reason, 
    then the try statement completes abruptly for the same reason.
    \end{enumerate}
  \item 
  If the run-time type of $V$ is not assignable to the parameter 
  of any catch clause of the try statement, then the try statement 
  completes abruptly because of a throw of the value $V$.
  \end{enumerate}
\item 
\label{tc-three}
If execution of the try block completes abruptly for any other reason, 
then the try statement completes abruptly for the same reason.
\end{enumerate}
\end{itemize}
In fact, points~\ref{tc-i} and~\ref{tc-ii} can be merged. 
Our treatment of exceptions is similar to the one in Java 
when execution of the try block completes normally (point~\ref{tc-one}) 
or completes abruptly because of a throw of an exception of constructor 
$i\in I$ (point~\ref{tc-two}): indeed, in our framework there is no other reason
for the execution of a try block to complete abruptly (point~\ref{tc-three}). 
Thus, the description can be simplified as follows. 
\begin{itemize}
\item[] 
A try statement without a finally block is executed by 
first executing the try block. 
Then there is a choice:
\begin{enumerate}
\item 
\label{new-tc-one}
If execution of the try block completes normally, 
then no further action is taken 
and the try statement completes normally.
\item 
\label{new-tc-two}
If execution of the try block completes abruptly 
because of a throw of a value $V$, 
then there is a choice:
  \begin{enumerate}
  \item 
  If the run-time type of $V$ is assignable to the parameter 
  of any catch clause of the try statement, 
  then the first (leftmost) such catch clause is selected. 
  The value $V$ is assigned to the parameter of the selected 
  catch clause, the block of that catch clause is executed, 
	and the try statement completes in the same way as this block.
  \item 
  If the run-time type of $V$ is not assignable to the parameter 
  of any catch clause of the try statement, then the try statement 
  completes abruptly because of a throw of the value $V$.
  \end{enumerate}
\end{enumerate}
\end{itemize}
This simplified description corresponds to the definition 
of $ \try{f}{\catchn{i_1}{g_1}{i_n}{g_n}} $ in Definition~\ref{defi:expl-spec}, 
with points~\ref{new-tc-one} and~\ref{new-tc-two}  corresponding 
respectively to \textbf{(try)} and \textbf{(catch)}.

\end{document}